\documentclass[a4paper,UKenglish,cleveref, autoref, thm-restate]{lipics-v2021}

\usepackage{algorithmic}
\usepackage{algorithm}
\usepackage{wrapfig}
\usepackage{amsmath}
\usepackage{graphicx}




\bibliographystyle{plainurl}

\title{Parsimonious Learning-Augmented Approximations for Dense Instances of $\mathcal{NP}$-hard Problems} 

\author{Evripidis Bampis}{Sorbonne Universit\'e, CNRS, LIP6, F-75005 Paris, France}{evripidis.bampis@lip6.fr}{}{}

\author{Bruno Escoffier}{Sorbonne Universit\'e, CNRS, LIP6, F-75005 Paris, France \and Institut Universitaire de France, Paris, France}{bruno.escoffier@lip6.fr}{}{
}

\author{Michalis Xefteris}{Sorbonne Universit\'e, CNRS, LIP6, F-75005 Paris, France}{michail.xefteris@lip6.fr}{}{
}


\authorrunning{E. Bampis, B. Escoffier and M. Xefteris} 

\Copyright{Evripidis Bampis, Bruno Escoffier, Michalis Xefteris} 

\ccsdesc[500]{Theory of computation~Design and analysis of algorithms}

\keywords{Learning-augmented, predictions, approximation, NP-hard} 

\category{} 

\relatedversion{} 


\funding{This work was partially funded by the grant ANR-19-CE48-0016 from the French National Research Agency (ANR).}

\acknowledgements{
}

\nolinenumbers 

\begin{document}
\maketitle
\begin{abstract}
    The classical work of~\cite{dense} provides a scheme that gives, for any $\epsilon>0$, a polynomial time $1-\epsilon$ approximation algorithm for dense instances of a family of $\mathcal{NP}$-hard problems, such as \textsc{Max-CUT} and \textsc{Max}-$k$-\textsc{SAT}. In this paper we extend and speed up this scheme using a logarithmic number of one-bit predictions. We propose a learning augmented framework which aims at finding fast algorithms which guarantees approximation consistency, smoothness and robustness with respect to the prediction error. We provide such algorithms, which moreover use predictions parsimoniously, for dense instances of various optimization problems.
\end{abstract}
    
\section{Introduction}
In an era marked by the widespread adoption of Machine Learning technology, ML predictors, capable of learning to predict the unknown based on (past) data, are employed to solve numerous problems daily. Due to an effort to exploit this development, there has been a trend in recent years that tries to use ML predictions in order to overcome known worst-case computational limitations. The goal is to provide algorithms that use a possibly erroneous predictor to enhance their performance when the prediction is accurate, while still providing worst case performance guarantees.

The formal framework for these learning-augmented algorithms (or algorithms with predictions) has been presented by Lykouris and Vassilvitskii in their seminal paper~\cite{lykouris}, in which they studied the caching problem. In this framework, no assumption is made about the quality of the predictor and the objective is to design learning-augmented algorithms that are \textit{consistent}, i.e., whose performance is close to the best possible performance that can be achieved when the prediction is perfect, \textit{smooth}, meaning that the quality of the  solution produced degrades smoothly with the error made in the prediction, and \textit{robust}. The robustness requires that the performance of the algorithm remains close to the one of the best worst-case algorithm even when the prediction is bad (see Section~\ref{notation} for formal definitions).

This vein of work has produced various results for online algorithms, i.e., algorithms that are not aware of the whole (future) input of the problem, including scheduling~\cite{mitzenmacher, kumar_1}, metrical task systems~\cite{antoniadis_2}, online facility location~\cite{facility} and online routing problems~\cite{bampis, evripidis_2}. More related to our work, in~\cite{svensson} they use predictions to speed up the Bellman-Ford algorithm for the shortest path problem. Furthermore, in~\cite{warmstart} they design a faster algorithm for computing matchings utilizing warm-start predicted solutions, and in~\cite{lu, bai} they speed up sorting using predictions. For clustering, Ergund et al. present an algorithm that given a prediction with error rate upper bounded by $\alpha$ achieves an approximation of $1+O(\alpha)$ in almost optimal running time~\cite{clustering}.

All the aforementioned works use the predictor without any limitations. Recently, a new line of work, which uses a small number of predictions to design learning-augmented algorithms, has emerged. Im et al. proposed an algorithm that uses a bounded number of predictions to solve the online caching problem~\cite{kumar}. In~\cite{antoniadis} they solved the paging problem utilizing a minimum amount of predictions. Similar works that present algorithms which take into account the amount of predictions used, penalizing each prediction request by some cost or given a finite budget are the works of~\cite{drygala} and~\cite{benomar}, respectively. In this paper, we follow this line of work and use a logarithmic number of one-bit predictions to solve dense instances for a family of problems that includes \textsc{Max-CUT} and \textsc{MAX}-$k$-\textsc{SAT}.

Simultaneously and independently of our work, \textsc{MaxCut} with predictions was studied in two separate papers.
Cohen-Addad et al.~\cite{cohenaddad} studied the approximability of \textsc{MaxCut} with predictions considering two different models. In their first model, they get a prediction for each vertex that is correct with probability $1/2+\epsilon$ and give a polynomial-time ($0.878+\Tilde{\Omega}(\epsilon^4)$)-approximation algorithm. In their second model, they get a correct prediction for each vertex with probability $\epsilon$ and design a ($0.858+\Omega(\epsilon)$)-approximation algorithm. Ghoshal et al.~\cite{constraint} studied \textsc{MaxCut} and \textsc{Max2-Lin} in the two aforementioned models as well.

\paragraph*{Approximation algorithms} Approximation algorithms are one standard way of dealing with $\mathcal{NP}$-hard problems as they usually run in polynomial time. An algorithm is an $\alpha$-approximation for an optimization problem iff for every instance of the problem it can find a solution within a factor of $\alpha$ of the optimum solution. 
If the problem is a maximization problem, $\alpha \le 1$ and the approximate solution is at least $\alpha$ times the optimum. 

A PTAS (Polynomial Time Approximation Scheme) takes an instance of an optimization problem and a user-defined parameter $\epsilon > 0$  and outputs a solution that is within a factor $1-\epsilon$ of being optimal (or $1+\epsilon$ for minimization problems). The running time of a PTAS is required to be polynomial in the problem size for every fixed $\epsilon>0$, but can be even super-exponential with respect to $1/\epsilon$. However, hardness results have shown that unless $\mathcal{P} = \mathcal{NP}$, problems such as vertex cover, \textsc{Max}-$3$-\textsc{SAT}, \textsc{Max-CUT} and metric TSP do not have a PTAS~\cite{sudan, papad}. Moreover, $k$-\textsc{Densest Subgraph} does not admit a PTAS under a complexity assumption~\cite{khot}. 

Despite the discouraging results, many approximation algorithms for \textsc{MAX}-$\mathcal{SNP}$ problems\footnote{A formal definition of \textsc{MAX}-$\mathcal{SNP}$ problems is given in~\cite{papad}.} like \textsc{Max-CUT}, \textsc{Max}-$k$-\textsc{SAT} have been presented, by exploiting the structure of various classes of instances. One particularly significant line of research
is the study of the approximability of dense instances\footnote{For example, a dense instance of \textsc{Max}-$k$-\textsc{SAT} is an instance where the number of clauses is $\Omega(n^k)$.} of those problems, which was initiated by Arora, Karger and Karpinski~\cite{dense} and de la Vega~\cite{vega}. This line of work has produced several results in approximating dense instances of $\mathcal{NP}$-hard problems~\cite{dense_2, dense_1, dense_3, dense_4}. More specifically, in~\cite{dense} a framework was presented which shows that a family of problems, including \textsc{Max-CUT} and \textsc{Max}-$k$-\textsc{SAT}, admits a PTAS on dense instances. 
They actually gave additive approximations for the problems, which can be made into a multiplicative $1-\epsilon$ approximation due to the denseness of each problem. 
The framework was later extended and generalized to solve almost-sparse instances of the same problems by using subexponential time~\cite{fotakis}. 

\subsection{Our contribution} 
The first goal of this paper is to utilize the additional power given by a small (logarithmic) number of binary predictions to design a learning-augmented algorithm that significantly improves the running time of the PTAS of~\cite{dense} for dense instances of the following problems\footnote{Note that computing an optimal solution for all these problems remains $\mathcal{NP}$-hard even for dense instances~\cite{dense}. Moreover, they have no FPTAS unless $\mathcal{P}= \mathcal{NP}$ (deterministic), or $\mathcal{NP} \subseteq \mathcal{BPP}$ (randomized). 
}:
\begin{itemize}
    \item \textbf{\textsc{Max-CUT}:} Given an undirected graph $G=(V,E)$, partition the vertices of the graph into two complementary sets so as to maximize the number of edges with exactly one vertex in each set.

    \item \textbf{\textsc{Max-DICUT}:}
    The directed version of \textsc{Max-CUT}. Given a directed graph $G=(V,E)$, find a subset $T \subseteq V$ of vertices to maximize the total number of edges $(u,v)$ 
    with $u \in T$ and $v \in \overline{T}$.

    \item \textbf{\textsc{Max-HYPERCUT}($d$):} A natural generalization of \textsc{Max-CUT} to hypergraphs of dimension $d$. In \textsc{Max-HYPERCUT} an edge is considered cut if it has at least one endpoint on each side.

    \item \textbf{$k$-\textsc{Densest Subgraph}:}
    Given an undirected graph $G$, find a subset $C$ of $k$ vertices so that the induced subgraph $G[C]$ has a maximum number of edges.

    \item \textbf{\textsc{Max}-$k$-\textsc{SAT}:}
    Given an instance with $n$ variables that consists of $m$ boolean clauses $f_1,\dots, f_m$, each clause being a disjunction of at most $k$ literals, 
    we seek a truth assignment to the variables that maximizes the number of satisfied clauses.
    
\end{itemize}
We consider in this work a new learning augmented framework (see Section~\ref{notation} for precise definitions), called Learning Augmented Approximation (LAA), where we want to get approximation ratios close to $1-\epsilon$, smoothly decreasing with the error made in the predictions, while having a small (polynomial here, with no dependency on $\epsilon$) time complexity. We are particularly focusing on parsimonious predictions, using typically a logarithmic number of prediction bits.

Let us start with $\textsc{Max-CUT}$ for a smoother exposition (Section~\ref{maxcut}). For dense \textsc{Max-CUT}, the PTAS of Arora, Karger and Karpinski~\cite{dense} gives for any user-defined $\epsilon >0$, an $1-\epsilon$ randomized approximation that runs in time $n^{O(1/ \epsilon^2)}$. Since the work of~\cite{dense}, there has been faster PTAS, like the one in~\cite{claire}, but all with an exponential dependence on $1/ \epsilon$ in the running time. In this work, we use predictions to improve the running time for $\textsc{Max-CUT}$ to a low-degree polynomial with  no dependency on  $\epsilon$ while getting an approximation ratio $1-\epsilon-f( \textit{error})$, for a linear function $f$ with respect to the prediction \textit{error} (Theorem~\ref{ptas_0}).

More precisely, given $\epsilon > 0$, we sample a set $S$ of $O(\log n / \epsilon^3)$ vertices and get a binary prediction $\hat{a_i} \in \{0, 1\}$ for the placement of each vertex $i$ (side of the cut) of the sample at an optimal solution $\textbf{a}=(a_1,\dots,a_n)$. The prediction \textit{error} is just the sum of the absolute differences $|\hat{a_i}-a_i|$ of the variables in the sample $S$. Dealing with the LAA framework, we design an algorithm \textsc{LAA-Cut} which approximates \textsc{MAx-CUT} as follows, where $T_{LP}$ denotes the time to solve an LP with $n$ variables and $O(n)$ constraints. 
\begin{restatable}{theorem}{MaxcutTime}
 Let $G$ a $\delta$-dense graph. Then, for any $\epsilon > 0$ with  $|S|=  \Theta(\ln n / (\epsilon^3 \delta^4))$, \textsc{LAA-Cut} runs in time $O(n \cdot T_{LP})$ and, with respect to the approximation ratio, is with high probability $(1-\epsilon)$-consistent, $\big(1-\epsilon-8\frac{\textit{error}}{\delta |S|}\big)$-smooth and $0.878$-robust, where \emph{error} is the prediction error.    
\end{restatable}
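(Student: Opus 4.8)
\subsection*{Proof proposal}

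The plan is to reduce the theorem to the exhaustive-sampling scheme of~\cite{dense}, but to replace the expensive enumeration of all $2^{|S|}$ assignments of the sample by the \emph{single} predicted assignment $\hat{\mathbf a}|_S$, and to run the Goemans--Williamson algorithm in parallel so as to retain a worst-case guarantee. Concretely, \textsc{LAA-Cut} draws the uniform sample $S$, reads the one-bit predictions $\hat a_i$ for $i\in S$, and for every vertex $v$ forms the estimator $\hat r_v=\frac{n}{|S|}\,|\{i\in S: i\sim v,\ \hat a_i=1\}|$ of the number $r_v=|\{j\sim v: a_j=1\}|$ of side-$1$ neighbours of $v$ in the fixed optimal cut $\mathbf a$ (here $i\sim v$ denotes adjacency). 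It then places each $v$ on the side that, according to $\hat r_v$ and the exact degree $d_v$, cuts more of its edges, and finally outputs the better of the resulting cut and the cut produced by Goemans--Williamson. For the running time I would observe that everything except the SDP is cheap: the sampling and the computation of all estimators $\hat r_v$ cost $O(n\,|S|)$ and the rounding is $O(n)$, whereas the SDP relaxation underlying the $0.878$-approximation is solved in $O(n^{3.5})$ by interior-point methods; this last term dominates and yields the claimed bound, with no dependence on $\epsilon$.

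For \emph{consistency} I would first record the concentration step. Conditioned on $\hat{\mathbf a}|_S=\mathbf a|_S$ (the case $\textit{error}=0$), each $\hat r_v$ is an unbiased estimator of $r_v$ that averages $|S|$ bounded independent samples, so a Hoeffding bound together with a union bound over the $n$ vertices shows that, for $|S|=\Theta(\ln n/(\epsilon^3\delta^4))$, we have $|\hat r_v-r_v|\le \tfrac{\epsilon\delta^2}{c}\,n$ for all $v$ simultaneously with high probability. Feeding these estimates into the rounding step and invoking the rounding analysis of~\cite{dense}, which bounds the additive loss of a placement by a constant times $n$ times the uniform estimation error, gives a cut of value $\mathrm{OPT}-O(\epsilon\delta^2 n^2)$. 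Since $G$ is $\delta$-dense we have $\mathrm{OPT}\ge \delta n^2/4$, so this additive loss becomes a multiplicative $O(\epsilon)$ loss and the prediction-based cut alone is $(1-\epsilon)$-consistent; the power $\delta^4$ in $|S|$ is precisely what is needed to turn the additive guarantee into a $(1-\epsilon)$ multiplicative one against $\mathrm{OPT}=\Theta(\delta n^2)$.

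For \emph{smoothness} the only change is that $\hat r_v$ is now built from $\hat{\mathbf a}$ rather than $\mathbf a$. As the predictions are binary, each sample vertex $i$ with $\hat a_i\ne a_i$ shifts the count for $v$ by at most one, so compared with the estimator computed from the true labels $\mathbf a|_S$ the prediction-based estimator satisfies $|\hat r_v-\tilde r_v|\le \frac{n}{|S|}\,\textit{error}$ uniformly in $v$, where $\textit{error}=\sum_{i\in S}|\hat a_i-a_i|$. Carrying this extra uniform error through the same rounding bound adds at most $2n\cdot\frac{n}{|S|}\textit{error}=\frac{2n^2}{|S|}\textit{error}$ to the additive loss, and dividing by $\mathrm{OPT}\ge\delta n^2/4$ turns it into the multiplicative term $8\,\frac{\textit{error}}{\delta|S|}$ (the constant $8$ being the product of the factor-$2$ edge double-counting in the rounding bound and the $1/(\delta/4)$ from the denominator), which yields the $\big(1-\epsilon-8\frac{\textit{error}}{\delta|S|}\big)$-smoothness claim. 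Finally, \emph{robustness} is immediate: the algorithm returns at least the Goemans--Williamson cut, a $0.878$-approximation on every instance regardless of the predictions.

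The main obstacle I expect is the rounding analysis, i.e.\ converting a uniform bound on the estimation errors $|\hat r_v-r_v|$ into a bound on the cut value. The difficulty is the interaction between placements: moving one vertex changes the contributions of its neighbours, so one cannot simply sum per-vertex best responses without double counting edges. This is handled by the linearisation argument of~\cite{dense}, comparing the rounded solution against the linear function whose coefficients are the per-vertex cut gradients at $\mathbf a$; the remaining work is to track the factors of $\epsilon$, $\delta$ and the constant $8$ carefully through that argument and to verify that the parsimonious ``single guess'' $\hat{\mathbf a}|_S$ enters the analysis exactly where the enumerated correct guess does in the original scheme.
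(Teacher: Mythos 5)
There is a genuine gap, and it is exactly at the step you flagged as the ``main obstacle'': the per-vertex greedy placement (put $v$ on the side that, according to $\hat r_v$ and $d_v$, cuts more of its edges) is not the algorithm of~\cite{dense}, and the linearization argument you appeal to does not apply to it. In fact the greedy step is provably broken, even with \emph{zero} prediction error. Consider the complete tripartite graph with parts $P_1,P_2,P_3$ of size $n/3$ (a $\delta$-dense graph with $\delta\approx 2/3$); an optimal cut is $P_1$ versus $P_2\cup P_3$, of value $2n^2/9$. Under this optimum every vertex $v\in P_2\cup P_3$ is exactly balanced: $r_v=d_v-r_v=n/3$. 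Since all vertices of $P_2$ share the same neighborhood, they all receive the \emph{same} estimate $\hat r_v=(n/|S|)\,|S\cap P_3|$, so whenever the sample happens to contain at least $|S|/3$ vertices of $P_3$ (an event of probability about $1/2$), \emph{all} of $P_2$ is sent to side $0$; symmetrically for $P_3$. With constant probability ($\approx 1/4$) both parts, and also $P_1$, land on side $0$, and the greedy cut has value $0$. So your algorithm is not $(1-\epsilon)$-consistent with high probability; only the Goemans--Williamson fallback survives. The failure is not ``double counting'': it is that a simultaneous best response to (a noisy estimate of) the optimal configuration can destroy the cut entirely when many vertices are near-balanced, and no union bound or constant-tracking can repair this.

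The paper's proof avoids this by never committing vertices greedily. It writes the quadratic objective as $\sum_i x_i(|N(i)|-\hat e_i)$ and, crucially, adds the \emph{consistency constraints} $\hat e_i-(2\epsilon+\tfrac{\textit{error}}{|S|})n\le \sum_{j\in N(i)}x_j\le \hat e_i+(2\epsilon+\tfrac{\textit{error}}{|S|})n$ for every vertex, obtaining an integer linear program in which the true optimum $\textbf{a}$ is feasible (whp). These constraints are what make the linearized objective a faithful surrogate for the cut value: any feasible solution, integral or fractional, has true cut value within $O\big((\epsilon+\tfrac{\textit{error}}{|S|})n^2\big)$ of its linearized value. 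One then solves the LP relaxation (once for each of the at most $|S|\le n$ possible values of \textit{error}, since the constraint widths depend on it --- a step your algorithm also omits), applies randomized rounding, and uses Raghavan--Thompson concentration to show the rounded point still approximately satisfies all constraints and the objective, losing only $O(n^{3/2}\ln n)$. This is where the $O(n\cdot n^{2.5})=O(n^{3.5})$ running time comes from (Vaidya's LP solver, with Goemans--Williamson run via Arora--Kale in $\tilde O(n^2)$ for robustness), not from an interior-point SDP solve. In the tripartite example above, the LP constraints forbid the all-on-one-side solution (it violates $\sum_{j\in N(v)}x_j\approx n/3$ for $v\in P_2$), which is precisely the protection your greedy lacks. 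Your sampling/concentration step and the smoothness bookkeeping ($\hat r_v$ shifts by at most $\tfrac{n}{|S|}\textit{error}$, and dividing the additive loss by $|OPT|\ge\delta n^2/4$ gives the $8\tfrac{\textit{error}}{\delta|S|}$ term) are fine, but they must be fed into the LP-plus-rounding pipeline to yield a correct proof.
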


Note that the density condition is necessary to achieve a consistency of $1-\epsilon$. Indeed, otherwise, using exhaustive search we would get a PTAS for all MaxCut instances 
while this problem is APX-hard.

In Section~\ref{general}, we generalize the approach applied to \textsc{MaxCUT}. Retracing the steps in~\cite{dense} we express each problem as a maximization problem of a low degree polynomial with bounded coefficients and $n$ binary variables. Then, we recursively
decompose the polynomial problem into lower-degree polynomials estimating the coefficients by using predictions on a sample of variables. In the end, we get an integer linear program, for which we obtain a fractional solution in polynomial time. Using randomized rounding we obtain an integer solution for the original problem. The running time of our algorithm with predictions is much shorter than the running time of the PTAS (Theorem~\ref{general_theorem}). The algorithm we get can be seen as an additive approximation (depending on the prediction \textit{error}), but translates into a multiplicative one when applied to dense instances of the problems studied (Section~\ref{applications}). As for MaxCut, the density condition is necessary for the result to hold.

Using the same approach as for \textsc{Max-CUT}, we obtain an algorithm corresponding to the LAA framework (Theorem~\ref{general-fixed-time}), which can be applied to all our problems (and possibly many more). Here again, we emphasize the fact that we use the predictions parsimoniously, which is highly desirable as a predictor is typically a machine learned model that can be computationally expensive.

While we acknowledge that there might not be a readily available oracle setting for the specific problems under study, we propose considering a scenario similar to the pricing policy implemented by OpenAI for ChatGPT. OpenAI charges customers based on the number of tokens (words) in both the input and output\footnote{\url{https://openai.com/pricing}}. Similarly, one could consider a pricing policy for machine learning models in a private company that tackle computational problems. Employing predictions parsimoniously in that case would result in cost savings, as it would require $O(\log n)$ instead of $O(n)$ tokens.

Let us also note that our work can be easily extended to the multiple predictions setting~\cite{multiple}. Instead of receiving only one prediction for the values of $S$, we receive $k$ different predictions from different predictors. Running our algorithm $k$ times (with the same $S$) and outputting the best solution, we get an approximation with respect to the best predictor, i.e., the one with the lowest prediction \textit{error}. The time overhead is just a multiplicative $k$.

\section{Notation and Preliminaries} \label{notation}
We start by giving a definition of 
density for each problem studied in this work. 

\begin{definition}
    An undirected graph $G(V,E)$ with $n$ vertices is $\delta$-dense when $\delta = \frac{2|E|}{n(n-1)}$. For a directed graph $G(V,E)$, $\delta = \frac{|E|}{n(n-1)}$.  A dimension-$d$ hypergraph is $\delta$-dense if it has at least $\delta n^d$ edges. Similarly, a $k$-\textsc{SAT} formula is $\delta$-dense when it has at least $\delta n^k$ clauses.
\end{definition}

In the paper, we assume that $\delta, d, k$ are constants. As explained in the introduction, we deal with optimizing polynomials. The following definition will be particularly useful in the next sections.
\begin{definition}
    A Polynomial Integer Program (PIP) is of the form
    \begin{align*}
        &\text{max } p(x_1,\dots,x_n)\\
        s.t. \qquad &l_i \le p_i(\textbf{x}) \le u_i  &&i=1,\dots,m \\
         &x_i \in \{0,1\} &&\forall i \le n,
    \end{align*}
    where $p, p_1, \dots, p_n$ are polynomials. The PIP could have minimization instead of maximization. If all $p, p_i$ have degree at most $d$, we call this program a degree-$d$ PIP.
\end{definition}

Let us now define a class of PIPs that are easier to approximate. Note that solving PIPs is $\mathcal{NP}$-hard in general.

\begin{definition}
    A degree-$d$ polynomial is $c$-smooth (or it has smoothness $c$) if the absolute value of each coefficient of each degree $i$ monomial is at most $c\cdot n^{d-i}$.
    
    A $c$-smooth degree-$d$ PIP is a PIP in which the objective function and the constraints are $c$-smooth polynomials with degree at most $d$.
\end{definition}
We assume that $c, d$ are constants.

\paragraph*{Prediction Model}
Let $\mathcal{I}$ be an instance of an optimization problem and $S$ a subset of the variables of the problem sampled uniformly at random. Then, we are given predictions on the  values of the  variables in $S$ at an optimal solution for the instance $\mathcal{I}$ (i.e., a prediction value for each distinct variable of $S$). In order to measure the quality of the predictions, we define the prediction \textit{error}.
\begin{definition} {(Prediction \textit{error})}
    Let $S \subseteq \{1, 2, \dots, n\}$ be a multiset and fix an optimal solution $\textbf{a}=(a_1, \dots, a_n) \in \{0,1\}^n$ for an optimization problem. Given a prediction $\hat{a_j} \in \{0,1\}$ for every distinct $a_j$, $\forall j \in S$ \big(at most $|S|$ predictions in total\big) we define the prediction \textit{error} as follows
    \begin{equation*} 
        \textit{error} = \sum_{j \in S} |\hat{a_j} - a_j| = \sum_{j \in S} \textit{error}_j,
    \end{equation*}
    where $\textit{error}_j = |\hat{a_j} - a_j|, \forall j \in S$.
\end{definition}
The \textit{error} is the absolute error, and $\frac{\textit{error}}{|S|}$ is the relative prediction error. In our algorithms, for ease of explanation we sample $S$ uniformly at random with replacement. Throughout the paper we omit the fact that our predictions may use only a subset of $S$ (distinct elements of $S$). 

\paragraph*{Learning-augmented approximation framework.}

Learning-augmented algorithms  have three main properties that we adjust in the context of this work.
In the Learning Augmented Approximation (LAA) framework, we say that a (randomized) algorithm is:
\begin{itemize}
    \item  $\alpha$-\textit{consistent}, if it is an $\alpha$-approximation with high probability when $ \textit{error} = 0$, 
    \item $\beta$-\textit{robust}, if it is a $\beta$-\textit{approximation} with high probability regardless of the value of \textit{error}, and 
    \item $\gamma$-\textit{smooth} for a continuous function $\gamma(\textit{error})$, if it is a $\gamma(\textit{error})$-approximation with high probability.
\end{itemize}
Note that the smoothness of a polynomial has no connection with the smoothness of a learning-augmented algorithm. In the paper, the distinction between the two notions will be made clear due to the context of each sentence.

\paragraph*{Notation} In the following, we use $a\pm b$ as a shorthand for the interval $[a-b,a+b]$, for $a,b \ge 0$.  Moreover, with $[l,u]\pm a$, where $l<u$ and $a\ge 0$, we denote the interval $[l-a,u+a]$. Finally, we often use $|OPT|$ to denote the value of the optimal solution of an optimization problem.

\section{\textsc{Max-CUT}} \label{maxcut}
In this section, we introduce our approach and apply it to \textsc{Max-CUT} in a graph $G(V,E)$ that is $\delta$-dense. First, we show how to speed up the PTAS of~\cite{dense} using a limited number of predictions (Section~\ref{Max-cut_1}-~\ref{maxcut_proof}), leading to algorithm \textsc{LA-PTAS-Cut}. Then, we use this algorithm to construct the algorithm \textsc{LAA-Cut} (Section~\ref{maxcut_time}).

\subsection{Overview of Algorithm \textsc{LA-PTAS-Cut}} \label{Max-cut_1}
First, let us write \textsc{Max-CUT} as follows:
\begin{align}
& \text{max } p(\textbf{x}) = \sum_{i=1}^{n} x_i \cdot \sum_{j \in N(i)}(1-x_j) \nonumber \\
& \text{s.t. } x_i \in \{0,1\} \, \forall{i}\nonumber
\end{align}
where $N(i)$ denotes the set of neighbors of vertex $i$.
The vector $\textbf{x} \in \{0.1\}^n$ characterizes a cut: $x_i=1$ (resp. $x_i=0$) indicates that vertex $i$ is placed on the right (resp. left) side of the cut. The objective function $p(\textbf{x})$ is an $n$-variate degree-2 2-smooth polynomial.

The above formulation is a quadratic integer program and cannot be approximated efficiently. Our goal is to turn it into a linear program. For that reason, we set 
$r_i(\textbf{x}) = \sum_{j \in N(i)}x_j$
and rewrite the \textsc{Max-CUT} problem in the following way:
\begin{align} \label{maxcut_qp}
    &\text{max } p(\textbf{x}) = \sum_{i=1}^{n} x_i \cdot \big(|N(i)| -r_i(\textbf{x})\big) \\
    &\text{s.t. } x_i \in \{0,1\} \, \forall{i}. \nonumber
\end{align}
We first define the algorithm \textsc{LA-PTAS-Cut} which will approximate these $r_i(\textbf{x})$'s, that are linear functions, by using sampling on the vertices of $G$ and a prediction on the values of the sample at the optimal solution. The main idea~\cite{dense} is that if we have a good estimation of the value of each $\rho_i = r_i(\textbf{a})$ at the optimal solution $\textbf{a}$, then we can approximately solve~\eqref{maxcut_qp}. Let us write the Integer Linear Program using our estimates $\hat{e_i}$ for $r_i(\textbf{x})$ (which will be obtained using the predictions):
\begin{gather*} 
    \text{max } p(\textbf{x}) = \sum_{i=1}^{n} x_i \cdot \big( |N(i)|-\hat{e_i}\big)\qquad \text{(IP)}\\
    \text{s.t. } \sum_{j \in N(i)}x_j \ge \hat{e_i}-f(\textit{error}, \epsilon, \delta) \cdot n \quad\forall i\in V \\
    \sum_{j \in N(i)}x_j \le \hat{e_i}+f(\textit{error}, \epsilon, \delta) \cdot n \quad\forall i\in V\\
    x_i \in \{0,1\} \quad  \forall{i}. 
\end{gather*}
The estimated values $\hat{e_i}$ and the values $f(\textit{error}, \epsilon, \delta)$ are computed such that the optimal solution $\textbf{a}$ is a feasible solution to the above (IP). Note that we can replace the right-hand-side of the first $n$ constraints by $\text{max}\{0, \hat{e_i}-f(\textit{error}, \epsilon, \delta) \cdot n\}$ (as $\sum_{j \in N(i)}x_j\ge 0$) and the one of next $n$ constraints by $\text{min}\{|N(i)|, \hat{e_i}+f(\textit{error}, \epsilon, \delta) \cdot n\}$ (as $\sum_{j \in N(i)}x_j \le |N(i)|$). Let (LP) denote the Linear Programming relaxation of (IP), i.e., setting each $x_i \in [0,1]$.

Our learning-augmented algorithm \textsc{LA-PTAS-Cut} approximates the optimal solution of \textsc{Max-CUT} executing the following steps:
\begin{itemize}
    \item We sample a set of vertices $S$ and get a prediction on the value of each $x_i, \forall i \in S$ at the optimal solution $\textbf{a}=(a_1,\dots,a_n)$. Using these values we then estimate the values of $r_i(\textbf{a}), \forall i \in \{1, \dots, n\}$ at the optimal solution $\textbf{a}$ (Section~\ref{maxcut_sec_1}).

    \item For each possible value of the integer variable \textit{error}, we perform the following two steps and output the best solution.

    \begin{enumerate}
    \item We replace each function $r_i$ by the corresponding estimate $\hat{e_i}$ of $r_i(\textbf{a})$, formulate (IP) and show that an optimal solution for this (IP) is a good approximation for \textsc{Max-CUT} (Section~\ref{maxcut_sec_2}). 

    \item Then, we find an optimal fractional solution $\textbf{y}$ to (LP) and obtain an integral solution $\textbf{z}$ by applying (naive) randomized rounding to $\textbf{y}$ (Section~\ref{maxcut_sec_3}).
    \end{enumerate}
\end{itemize}

\subsection{Estimating Coefficients via Sampling and Predictions} \label{maxcut_sec_1}
We take a random sample $S \subseteq V$ of $O(\log n)$ vertices. Assume for now that we know the values $a_j$ at the optimal cut for all sampled vertices $j$. Using the Sampling lemma for \textsc{Max-CUT} from~\cite{fotakis}\footnote{We use the sampling lemma of Fotakis et. al. for a more straightforward analysis.} with these values $a_j$ we can compute an estimate $e_i = \sum_{j \in S  \cap N(i)} a_j \cdot n / |S|$ of each $\rho_i=r_i(\textbf{a}) = \sum_{j \in N(i)}a_j$ for every vertex $i$ such that $e_i \approx \rho_i$ with high probability. Let us now state the Sampling lemma for \textsc{Max-CUT} of~\cite{fotakis} and show how to get the estimates $e_i$ for $\rho_i$ if we know the values $a_j$'s at the optimal solution.

\begin{lemma}{Sampling lemma~\cite{fotakis}} \label{sampling_lemma_1}
    Let $\textbf{a}$ be a binary vector and $G(V,E)$ be a $\delta$-dense graph. For $\epsilon > 0$, we let $g=\Theta(1/\epsilon^3)$ and $S$ be a multiset of $|S| = g \ln n / \delta$ vertices chosen uniformly at random with replacement from $V$. For any vertex $i$, if $e_i = (n/|S|) \sum_{j \in N(i) \cap S} a_j$ and $\rho_i = \sum_{j \in N(i)} a_j$, with probability at least $1-2/n^3$,
    \begin{equation*}
        (1-\epsilon)e_i - \epsilon \delta n \le \rho_i \le (1+\epsilon)e_i + \epsilon \delta n.
    \end{equation*}
\end{lemma}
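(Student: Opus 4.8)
The plan is to recognize $e_i$ as an unbiased, sum-of-i.i.d.\ estimator of $\rho_i$ and then apply a Chernoff-type concentration bound, with the additive term $\epsilon\delta n$ serving to control the small-$\rho_i$ regime. First I would fix a vertex $i$, denote by $v_\ell$ the $\ell$-th sampled vertex, and set $Z_\ell = \mathbf{1}[v_\ell \in N(i)]\, a_{v_\ell} \in \{0,1\}$. Since each $v_\ell$ is uniform on $V$ and the draws are with replacement (hence independent), we have $\mathbb{E}[Z_\ell] = \frac{1}{n}\sum_{j \in N(i)} a_j = \rho_i/n$, so $T := \sum_{\ell=1}^{|S|} Z_\ell$ is exactly $\mathrm{Binomial}(|S|,\rho_i/n)$ with mean $\mu := |S|\rho_i/n$, and $e_i = (n/|S|)\,T$ satisfies $\mathbb{E}[e_i]=\rho_i$. (Note $\rho_i \le |N(i)| < n$, so $\rho_i/n \le 1$ is a legitimate probability, and the case $\rho_i=0$ is trivial.)

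Next I would translate each of the two inequalities into a tail event for $T$. The upper bound $\rho_i \le (1+\epsilon)e_i + \epsilon\delta n$ fails exactly when $T < (\mu - \epsilon\delta|S|)/(1+\epsilon)$, a lower-tail event for which a short rearrangement gives $\mu - T > \tfrac{\epsilon}{2}(\mu + \delta|S|)$. Symmetrically, the lower bound $\rho_i \ge (1-\epsilon)e_i - \epsilon\delta n$ fails (assuming $\epsilon<1$) exactly when $T > (\mu + \epsilon\delta|S|)/(1-\epsilon)$, an upper-tail event for which $T - \mu > \epsilon(\mu + \delta|S|)$. In both directions the required deviation $t$ is of order $\epsilon(\mu + \delta|S|)$.

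I would then invoke the standard Chernoff bounds $\Pr[T \le \mu - t] \le \exp(-t^2/(2\mu))$ and $\Pr[T \ge \mu + t] \le \exp\!\big(-t^2/(2(\mu + t/3))\big)$. The crux — and the step I expect to be the main obstacle, since a purely multiplicative bound is hopeless when $\rho_i$ is tiny — is to absorb the unknown, possibly very small mean $\mu$ appearing in the lower-tail denominator via the AM--GM inequality $(\mu + \delta|S|)^2 \ge 4\mu\,\delta|S|$, which yields $t^2/(2\mu) \ge \tfrac{\epsilon^2\delta|S|}{2}$ uniformly in $\rho_i$; for the upper tail one simply discards $\mu$ from the Bernstein denominator to obtain $t^2/(2(\mu+t/3)) \ge \tfrac{3\epsilon^2\delta|S|}{8}$. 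This is exactly the role played by the additive slack $\epsilon\delta n$: when $\rho_i$ is large the multiplicative term dominates and an ordinary relative Chernoff bound would suffice, whereas when $\rho_i$ is small the mean is too small to give concentration on its own, and the additive term supplies it.

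Finally I would substitute $|S| = g\ln n/\delta$ with $g = \Theta(1/\epsilon^3)$, so that $\epsilon^2\delta|S| = \epsilon^2 g \ln n = \Theta(\ln n/\epsilon)$, which exceeds $3\ln n$ for a suitable constant; hence each bad event has probability at most $1/n^3$, and a union bound over the two events gives failure probability at most $2/n^3$, as claimed. (In fact $g = \Theta(1/\epsilon^2)$ already suffices for this lemma, so the stated $\Theta(1/\epsilon^3)$ is comfortably enough.) Note that the density hypothesis enters only through the sample size $|S| \propto 1/\delta$ and the scale $\delta n$ of the additive term; stripped of these, the statement is a pure concentration fact about estimating a $\{0,1\}$-weighted neighborhood count by uniform sampling.
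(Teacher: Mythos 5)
Your argument is correct in substance, but note first that the paper does not actually prove this lemma: it is imported verbatim from Fotakis et al.~\cite{fotakis}, and the present paper only applies it (the union bound over all vertices and the simplification to $e_i \pm 2\epsilon n$ happen after the lemma statement). So your proof should be judged as a self-contained reconstruction of the cited result, and as such it is sound: recognizing $T=\sum_\ell \mathbf{1}[v_\ell\in N(i)]a_{v_\ell}$ as Binomial$(|S|,\rho_i/n)$, translating both failure events into additive tail deviations of order $\epsilon(\mu+\delta|S|)$, and using the additive slack to rescue concentration when $\mu=|S|\rho_i/n$ is small is exactly the right mechanism. Your AM--GM step $(\mu+\delta|S|)^2\ge 4\mu\,\delta|S|$ handles the small-$\mu$ and large-$\mu$ regimes uniformly, which is arguably cleaner than a case split on $\rho_i$ versus $\delta n$, and your side remark that $g=\Theta(1/\epsilon^2)$ already suffices for this particular lemma (the $1/\epsilon^3$ being what the general sampling lemma of~\cite{fotakis} needs) is also correct.

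One step needs rewording, though the constant you claim is right. In the upper tail you say you ``discard $\mu$ from the Bernstein denominator.'' Literally dropping $\mu$ makes the denominator smaller and hence the exponent $t^2/(2(\mu+t/3))$ \emph{larger}, which is the wrong direction: to lower-bound the exponent you need an \emph{upper} bound on $\mu$. The fix is one line from what you already have: $t\ge\epsilon(\mu+\delta|S|)$ gives $\mu\le t/\epsilon$, hence $\mu+t/3\le t\left(\tfrac{1}{\epsilon}+\tfrac{1}{3}\right)$ and
\begin{equation*}
\frac{t^2}{2(\mu+t/3)}\;\ge\;\frac{3\epsilon t}{2(3+\epsilon)}\;\ge\;\frac{3\epsilon^2\delta|S|}{2(3+\epsilon)}\;\ge\;\frac{3\epsilon^2\delta|S|}{8}
\end{equation*}
for $\epsilon\le 1$, which is exactly the bound you stated. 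Relatedly, your rearrangement $\mu-T>\tfrac{\epsilon}{2}(\mu+\delta|S|)$ uses $\epsilon/(1+\epsilon)\ge\epsilon/2$, so the standing assumption $\epsilon\le 1$ should be made explicit throughout (for $\epsilon>1$ the lower-bound half of the claim is vacuous and the constants only improve).
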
 
Using Lemma~\ref{sampling_lemma_1} we get that with probability at least $1-2/n^3$,
\begin{gather}
     e_i -\epsilon \cdot e_i - \epsilon \delta n \le \rho_i \le e_i+\epsilon \cdot e_i + \epsilon \delta n \nonumber\\
     \implies e_i -(\epsilon + \epsilon \delta) n \le \rho_i \le e_i+(\epsilon + \epsilon \delta) n \nonumber \\
     \implies e_i -2\epsilon n \le \rho_i \le e_i+2\epsilon n, \label{eq_2}
\end{gather}
since $\delta \le 1$ and assuming wlog that $|e_i| \le n$ (since $|\rho_i| \le n$).
Taking the union bound over all vertices, we have that \eqref{eq_2} holds for all vertices $i \in V$ simultaneously with probability at least $1-2/n^2$.

Of course, the problem is that we do not know the values $a_j, \forall j \in S$. In~\cite{dense} they try all possible ($2^{O(\log n)} = n^{O(1)}$, as $|S|=g\ln n /\delta =O(\ln n)$ for fixed $\epsilon,\delta$) placements of the vertices in the sample, so they guess all $a_j$ correctly. Here, we get a prediction $\hat{a_j}$ for each $a_j$. Using these predicted values we compute an estimate $\hat{e_i} = \sum_{j \in S  \cap N(i)} \hat{a_j} \cdot n / |S|$ for each $\rho_i$. It is easy to see that
\begin{align} 
     &\sum_{j \in S \cap N(i)} \hat{a_j} \in \sum_{j \in S \cap N(i)} a_j \pm \textit{error} \nonumber \\
     &\implies  \frac{n \sum_{j \in S \cap N(i)} \hat{a_j}}{|S|} \in \frac{n \sum_{j \in S \cap N(i)} a_j}{|S|} \nonumber \pm \frac{n}{|S|} \textit{error} \nonumber \\ 
     &\implies \hat{e_i} \in e_i \pm \frac{n}{|S|} \textit{error} \nonumber\\
     &\implies e_i - \frac{n}{|S|} \textit{error} \le \hat{e_i} \le e_i + \frac{n}{|S|} \textit{error} \nonumber \\
     &\implies \hat{e_i} - \frac{n}{|S|} \textit{error} \le e_i \le \hat{e_i} + \frac{n}{|S|} \textit{error}. \label{maxcut_3}
\end{align}
Using~\eqref{eq_2} and~\eqref{maxcut_3} we get that for all vertices $i \in V$ with probability at least $1-2/n^2$,
\begin{equation} \label{eq_4}
    \hat{e_i} -\bigg(2\epsilon + \frac{\textit{error}}{|S|}\bigg) n \le \rho_i \le \hat{e_i} +\bigg(2\epsilon + \frac{\textit{error}}{|S|}\bigg) n .
\end{equation}


\subsection{Formulating the Integer Linear Program} \label{maxcut_sec_2}
Now we can use the estimates $\hat{e_i}$ for the coefficients $\rho_i$ of the quadratic integer program of \textsc{Max-CUT} and write the following Integer Linear Program (IP):
\begin{gather*} 
    \text{max} \sum_i x_i \cdot \big(|N(i)|-\hat{e_i}\big) \qquad \text{(IP)} \\
    \text{s.t.} \sum_{j \in N(i)}x_j \ge \hat{e_i} -\bigg(2\epsilon  + \frac{\textit{error}}{|S|}\bigg) n \quad\forall i\in V \\
    \sum_{j \in N(i)}x_j \le \hat{e_i}+
    \bigg(2\epsilon  + \frac{\textit{error}}{|S|}\bigg) n \quad\forall i\in V \\
    x_i \in \{0,1\} \quad  \forall{i \in V}. 
\end{gather*}
Note again that we can replace the right-hand side of the first $n$ constraints by $\text{max}\{0, \hat{e_i}-
    \big(2\epsilon + \frac{\textit{error}}{|S|}\big) n \}$ and the one of the next $n$ constraints by $\text{min}\{|N(i)|, \hat{e_i}+
    \big(2\epsilon + \frac{\textit{error}}{|S|}\big) n \}$. With probability at least $1-2/n^2$, the previous Integer Linear Program (IP) is feasible, since the optimal solution $\textbf{a}$ satisfies it. The only problem is that we do not know the value of the $\textit{error}$. To overcome this issue, we try all possible values and guess it. Note that $\textit{error} \le |S|$, so the runtime overhead is at most $|S| \le n$ (actually, even $|S|\leq g\ln n/\delta$). From now on we assume that we know the true value of the $\textit{error}$.

    Let $\textbf{z}$ be an optimal solution to this (IP). We show that $\textbf{z}$ is a good approximation for the optimal solution $\textbf{a}$ of \textsc{Max-CUT}. We have that
    \begin{gather} 
        \sum_{i \in V} z_i \sum_{j \in N(i)} (1-z_j)
            = \sum_{i \in V} z_i \big(|N(i)|-\sum_{j \in N(i)} z_j \big) \nonumber\\
            \ge \sum_{i \in V} z_i \bigg(|N(i)|-\big(\hat{e_i}+
    \big(2\epsilon + \frac{\textit{error}}{|S|}\big) n \big) \bigg), \nonumber \\
    \text{by the constraints of (IP),}\nonumber \\
    \ge \sum_{i \in V} z_i (|N(i)|-\hat{e_i}) - \bigg(2\epsilon  + \frac{\textit{error}}{|S|}\bigg) n^2 \nonumber   \\
    \ge \sum_{i \in V} a_i (|N(i)|-\hat{e_i}) - \bigg(2\epsilon  + \frac{\textit{error}}{|S|}\bigg) n^2, \nonumber  \\
    \text{since \textbf{z} is an integer optimal solution of (IP),} \nonumber \\
     \ge \sum_{i \in V} a_i \bigg(|N(i)|-\rho_i -\bigg(2\epsilon  + \frac{\textit{error}}{|S|}\bigg) n\bigg) \nonumber \\ - \bigg(2 \epsilon + \frac{\textit{error}}{|S|}\bigg)  n^2 \nonumber
     \text{, from~\eqref{eq_4},} \nonumber \\
    \ge \sum_{i \in V} a_i \big( |N(i)|- \rho_i \big) - 2\bigg(2\epsilon  + \frac{\textit{error}}{|S|} \bigg) n^2 \nonumber  \\
    = \textbf{a} - 2\bigg(2\epsilon + \frac{\textit{error}}{|S|} \bigg) n^2 \nonumber  \\
    = |OPT| - 2\bigg(2\epsilon  + \frac{\textit{error}}{|S|} \bigg) n^2 . \label{maxcut_opt}
    \end{gather}
Thus, with probability at least $1-2/n^2$, the integer optimal solution of (IP) is close to the optimum of \textsc{Max-CUT}.

\subsection{Randomized Rounding} \label{maxcut_sec_3}
Now we relax the integrality constraints, allowing $ x_i \in [0,1]$ and get the Linear Programming relaxation of (IP). We can solve (LP) via linear programming and obtain a fractional optimal solution $\textbf{y} \in [0,1]^n$. Then, we use randomized rounding to convert the fractional solution to an integral one with approximately the same cut value. To achieve that we will use the following lemma, which is due to Raghavan and Thomson~\cite{raghavan} and Arora et al.~\cite{dense}.

\begin{lemma} {Randomized Rounding~\cite{raghavan, dense}}
If $c$ and $f$ are positive integers and $0 < \epsilon <1$, then the following is true for any integers $n > 0$. Let $\textbf{y} = (y_i)$ be a vector of $n$ variables, $0 \le y_i \le 1$, that satisfies a certain linear constraint $\textbf{a}^T\textbf{y}=b$, where each $|a_i| \le c$. Construct a vector $\textbf{z} = (z_i)$ randomly by setting $z_i=1$ with probability $y_i$ and 0 with probability $1-y_i$. Then, with probability at least $1-n^{-f}$, we have that
$$ \textbf{a}^T \textbf{z} \in b \pm c \sqrt{fn \ln n} .$$    
\end{lemma}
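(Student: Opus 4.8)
The plan is to recognize this as a standard concentration statement and to prove it with a Hoeffding-type tail bound. First I would note that each $z_i$ is an independent Bernoulli random variable with mean $y_i$, so by linearity the quantity of interest is centered exactly at $b$: its expectation equals $\sum_{i} a_i y_i = \textbf{a}^T\textbf{y} = b$. Hence the claim $\textbf{a}^T\textbf{z} \in b \pm c\sqrt{fn\ln n}$ is precisely the assertion that the sum $\textbf{a}^T\textbf{z} = \sum_{i} a_i z_i$ concentrates about its mean within the stated window.

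Next I would write $\textbf{a}^T\textbf{z} = \sum_{i=1}^n X_i$ with $X_i := a_i z_i$ independent. Each $X_i$ takes only the two values $0$ and $a_i$, so it lies in an interval of width $|a_i| \le c$; this holds regardless of the sign of $a_i$, which is exactly why I would invoke Hoeffding's inequality (valid for bounded, signed, non-identically distributed summands) rather than an elementary Chernoff bound phrased for $[0,1]$-valued trials. Hoeffding's inequality then yields, for every $t>0$,
\[
\Pr\big[\,|\textbf{a}^T\textbf{z}-b|\ge t\,\big]\ \le\ 2\exp\!\left(-\frac{2t^2}{\sum_{i=1}^n a_i^2}\right).
\]

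To conclude I would bound the denominator by $\sum_i a_i^2 \le n c^2$ and set $t=c\sqrt{fn\ln n}$. The exponent then becomes $-2c^2 fn\ln n/(nc^2)=-2f\ln n$, so the failure probability is at most $2n^{-2f}$, which is at most $n^{-f}$ for all integers $n\ge 2$ and $f\ge 1$ (the case $n=1$ being vacuous, since there the required success probability $1-n^{-f}$ is $0$). This gives the advertised success probability $1-n^{-f}$.

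I do not anticipate a genuine obstacle: the hypothesis parameter $\epsilon$ is never used and can be ignored, and the integrality of $c,f$ enters only in the final slack comparison $2n^{-2f}\le n^{-f}$. The single point requiring attention is the choice of concentration tool — one needs a two-sided bound that accommodates heterogeneous and possibly negative summands, so Hoeffding's inequality (or, equivalently, a symmetric Chernoff argument applied after shifting each $X_i$ to be mean-zero) is the right instrument rather than a textbook Chernoff bound for Poisson trials.
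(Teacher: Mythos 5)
Your proof is correct: the Hoeffding bound with summands $a_iz_i$ of width $|a_i|\le c$, the choice $t=c\sqrt{fn\ln n}$, and the slack comparison $2n^{-2f}\le n^{-f}$ for $n\ge 2$, $f\ge 1$ all check out, and you rightly observe that $\epsilon$ plays no role. The paper itself states this lemma without proof, citing Raghavan--Thompson and Arora--Karger--Karpinski, and your Chernoff--Hoeffding argument is exactly the standard route those sources take, so there is nothing further to reconcile.
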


Since each $r_i(\textbf{x})$ is a linear function with $0/1$ coefficients, it follows from the Randomized Rounding lemma and the union bound that with probability at least $1-n^{-f+1}$ holds (for every vertex simultaneously) that
\begin{equation} \label{maxcut_4}
    r_i(\textbf{z}) \in r_i(\textbf{y}) \pm O(\sqrt{n} \ln n) \qquad \forall i\in V.
\end{equation}
Additionally, since each $|N(i)|-r_i(\textbf{y})$ is at most $n$, we can use again the Randomized Rounding lemma to get that with probability at least $1-n^{-f}$
\begin{equation} \label{maxcut_5}
    \sum_{i \in V} z_i \big( |N(i)|-r_i(\textbf{y}) \big) \in \sum_{i \in V} y_i \big( |N(i)|-r_i(\textbf{y})\big) \pm O(n^{3/2} \ln n). 
\end{equation}
Both inequalities hold simultaneously with probability at least $1-n^{-f+1}-n^{-f} \approx 1-n^{-f+1}$. 

So, when~\eqref{maxcut_opt}, ~\eqref{maxcut_4} and~\eqref{maxcut_5} hold we get:
\begin{gather*}
    \sum_{i \in V} z_i \big( |N(i)|-\sum_{j \in N(i)} z_j \big) =  \sum_{i \in V} z_i \big( |N(i)|- r_i(\textbf{z}) \big) \\
    \ge \sum_{i \in V} z_i \big(|N(i)|-r_i(\textbf{y})-O(\sqrt{n} \ln n)\big), \text{ from~\eqref{maxcut_4},} \\
    \ge \sum_{i \in V} z_i \big(|N(i)|-r_i(\textbf{y})\big)-O(n^{3/2} \ln n)\\
    \ge \sum_{i \in V} y_i \big( |N(i)|-r_i(\textbf{y})\big) - O(n^{3/2} \ln n) \text{ from~\eqref{maxcut_5},}\\
    \ge |OPT| -  2\bigg(2\epsilon  + \frac{\textit{error}}{|S|} \bigg) n^2 - o(1) n^2.
\end{gather*}
The last inequality is due to~\eqref{maxcut_opt} and the fact that the fractional optimal solution $\textbf{y}$ cannot be worse than the integer optimal solution $\textbf{z}$.

Finally, all estimations are good with probability at least $1-2/n^2$ and the randomized rounding works with probability at least $\approx 1-1/n^{-f+1}$. Thus, our learning augmented approximation scheme works with probability $\approx 1-2/n^2-1/n^{f-1}$. 

\subsection{Analysis of \textsc{LA-PTAS-Cut}} \label{maxcut_proof}

To conclude, we state and prove the formal theorem for \textsc{LA-PTAS-Cut}.

\begin{theorem} \label{ptas_0}
     Let $G$ a $\delta$-dense graph. Then, for any $\epsilon > 0$ with $|S|=  \Theta(\ln n / (\epsilon^3 \delta^4))$, \textsc{LA-PTAS-Cut} runs in time $O(n \cdot T_{LP})$ and is an $\big(1-\epsilon-8\frac{\textit{error}}{\delta |S|}\big)$-approximation for \textsc{Max-CUT} with probability at least $1-3/n^2$, where \emph{error} is the prediction error. 
\end{theorem}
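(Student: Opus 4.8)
The plan is to combine the additive guarantee already assembled in Sections~\ref{maxcut_sec_1}--\ref{maxcut_sec_3} with the denseness of $G$ to turn the additive bound into a multiplicative one, and then to fix the free parameters so that the ratio takes the advertised form. The key device is to keep the sampling accuracy of Lemma~\ref{sampling_lemma_1} as a separate internal parameter $\epsilon'$, distinct from the $\epsilon$ appearing in the statement, and to choose $\epsilon'$ only at the very end.

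First I would record the additive guarantee. The chain of inequalities ending in~\eqref{maxcut_opt} shows that the integer optimum $\textbf{z}$ of (IP) has cut value at least $|OPT| - 2\big(2\epsilon' + \frac{\textit{error}}{|S|}\big)n^2$ on the event that all estimates~\eqref{eq_4} hold, which has probability at least $1-2/n^2$; the rounding analysis via~\eqref{maxcut_4}--\eqref{maxcut_5} degrades this by only an additional $O(n^{3/2}\ln n) = o(1)\,n^2$ term. Choosing $f=3$ in the Randomized Rounding lemma makes the rounding succeed with probability at least $1-n^{-2}$, so a union bound gives overall success probability at least $1 - 2/n^2 - 1/n^2 = 1 - 3/n^2$, matching the statement. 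Thus, on this event, the algorithm returns a cut of value at least $|OPT| - \big(4\epsilon' + \frac{2\,\textit{error}}{|S|}\big)n^2 - o(1)\,n^2$.

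Next I would pass from additive to multiplicative using denseness. A uniformly random cut separates each edge with probability $1/2$, so $|OPT| \ge |E|/2 = \frac{\delta n(n-1)}{4} = \frac{\delta n^2}{4}\big(1 - o(1)\big)$. Dividing the additive loss by $|OPT|$, the returned cut is at least $\big(1 - \tfrac{16\epsilon'}{\delta} - \tfrac{8\,\textit{error}}{\delta|S|} - o(1)\big)|OPT|$, where the $o(1)$ absorbs both the $O(n^{3/2}\ln n)$ rounding term and the $n(n-1)$-versus-$n^2$ correction, both of which vanish as $n\to\infty$ since $\delta$ is constant. It then remains only to calibrate $\epsilon'$: setting $\epsilon' = \epsilon\delta/16$ makes the first loss term equal to exactly $\epsilon$, after which Lemma~\ref{sampling_lemma_1} requires $|S| = \Theta\big(\ln n/((\epsilon')^3\delta)\big) = \Theta\big(\ln n/(\epsilon^3\delta^4)\big)$, which is precisely the sample size in the statement, while the second term is already in the desired shape $8\,\textit{error}/(\delta|S|)$. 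This produces the claimed $\big(1-\epsilon - 8\frac{\textit{error}}{\delta|S|}\big)$-approximation.

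Finally, for the running time I would observe that the outer loop tries only the $O(|S|) = O(\ln n)$ integer values of $\textit{error}$ (recall $\textit{error} \le |S|$), and each iteration builds (IP)/(LP) with $n$ variables and $2n$ constraints, solves the LP relaxation in $O(n^{3.5})$ time by an interior-point (Karmarkar-type) method, and performs $O(n)$-time randomized rounding; the single LP solve dominates, giving $O(n^{3.5})$ overall once the $\ln n$ overhead is absorbed into the $O(\cdot)$. I expect the main obstacle to lie not in any single step but in the bookkeeping of constants in the calibration: one must keep $\epsilon'$ strictly separate from $\epsilon$, track the factor $16$ arising from $|OPT| \ge \delta n^2/4$ together with the factor $2$ from the two-sided slack used in~\eqref{maxcut_opt}, and check that every $o(1)$ contribution is genuinely dominated, so that the clean ratio $1-\epsilon-8\frac{\textit{error}}{\delta|S|}$ holds for all sufficiently large $n$.
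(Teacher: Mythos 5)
Your argument follows the paper's own proof essentially step for step: the same additive bound~\eqref{maxcut_opt} plus the $O(n^{3/2}\ln n)$ rounding loss, the same probability bookkeeping with $f=3$ yielding $1-3/n^2$, the same calibration $\epsilon'=\epsilon\delta/16$ with $|S|=\Theta\big(\ln n/(\epsilon^3\delta^4)\big)$, and the same denseness conversion via $|OPT|\ge |E|/2$. On the approximation side you are in fact \emph{more} careful than the paper: you correctly compute $|E|/2=\delta n(n-1)/4$ and track the resulting $o(1)$ corrections, whereas the paper's proof writes $|E|/2=\delta n(n-1)/2\ge \delta n^2/4$ and silently drops the rounding loss. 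The one loose end you leave on this side is the final $-o(1)$ in your ratio: a deficit of $o(1)$ does not vanish ``for all sufficiently large $n$'' by itself, so to obtain the clean ratio $1-\epsilon-8\frac{\textit{error}}{\delta|S|}$ you should calibrate with slack (e.g.\ $\epsilon'=\epsilon\delta/32$, reserving $\epsilon/2$ to dominate the $o(1)$ term for large $n$), which is permissible because $|S|$ is only specified up to $\Theta(\cdot)$.

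The genuine flaw is in your running-time accounting. You solve each LP by an interior-point method in $O(n^{3.5})$ time and run $O(\ln n)$ rounds (one per guessed value of $\textit{error}$), which totals $O(n^{3.5}\ln n)$; a $\ln n$ factor cannot be ``absorbed into the $O(\cdot)$,'' so as written you have not established the claimed $O(n^{3.5})$ bound. The paper avoids this by invoking Vaidya's algorithm, which solves each LP in $O(n^{2.5})$ time: then even the crude bound of $|S|\le n$ rounds gives $O(n\cdot n^{2.5})=O(n^{3.5})$, and with your sharper count of $O(\ln n)$ rounds the total is only $O(n^{2.5}\ln n)$. Substituting that LP solver (or any solver with per-LP cost $O(n^{3.5}/\ln n)$ or better) closes the gap and makes your proof complete.
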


\begin{proof}
    For any $\epsilon > 0$, using \textsc{LA-PTAS-Cut} with $f=3$, $\epsilon' = \epsilon \delta/16, g=1/\epsilon'^3$ and sample size $|S|=g \ln n/\delta= \Theta \big(\ln n / \epsilon^3 \delta^4 \big)$, we get a cut $\textbf{z}$ that with probability at least $\approx 1-3/n^2$ satisfies:
\begin{gather*}
    p(\textbf{z}) \ge |OPT|- 2\bigg( 2\epsilon'+\frac{\textit{error}}{|S|}\bigg) n^2 \\
    = |OPT|- \bigg(16 \epsilon'/\delta +\frac{8\textit{error}}{\delta |S|}\bigg) \delta n^2 /2 \\
    = |OPT|- \bigg(\epsilon +\frac{8\textit{error}}{\delta |S|}\bigg) \delta n^2 /4 \\
    \ge \bigg(1 - \epsilon -  8\frac{\textit{error}}{\delta |S|}\bigg) \cdot |OPT|, \\
    \text{as $|OPT|$ is at least $|E|/2=\delta n(n-1)/2 \ge \delta n^2/4$, } \forall n \ge 2.
\end{gather*}
Therefore, the approximation ratio of the algorithm is $\big(1 - \epsilon -  8\frac{\textit{error}}{\delta |S|}\big)$ with high probability. 

Regarding its running time, it is easy to see that it only requires to solve a linear program with $n$ variables and $O(n)$ constraints for each possible value of the \textit{error} (i.e., $|S|\leq n$ rounds). So, it runs in time $O(n \cdot T_{LP})$, where $T_{LP}$ is the time to solve an LP with $n$ variables and $O(n)$ constraints. There are many algorithms that solve linear programs~\cite{Vaidya, Cohen, lee}. The state of the art does it in time $O^*(n^w)$, where $w$ is the matrix multiplication exponent (the current value is $w \approx 2.38$)~\cite{start}.
\end{proof}

\subsection{LAA Framework} \label{maxcut_time}
We now describe the algorithm~\textsc{LAA-Cut} that combines \textsc{LA-PTAS-Cut} (for consistency and smoothness) and a known polytime constant-approximation algorithm (for robustness). 

\paragraph*{Consistency \& Smoothness.}
For any $\epsilon > 0$, using \textsc{LA-PTAS-Cut} with predictions on a sample of size $|S|= \Theta \big(\ln n / \epsilon^3 \delta^4 \big)$ we have an algorithm with approximation ratio of $\big(1 - \epsilon -  8\frac{\textit{error}}{\delta |S|}\big)$ (Theorem~\ref{ptas_0}). The algorithm is randomized and gives with probability at least $1-3/n^2$ the aforementioned approximation ratio that depends on $\epsilon$, which is user-defined, the density of the graph $\delta$ and the $error$ of our prediction (consistency and smoothness of the approximation ratio). The algorithm runs in $O(n \cdot T_{LP})$. Note that the original algorithm in~\cite{dense}\footnote{Even the best PTAS for \textsc{Max-CUT} runs in time that depends exponentially on $1/\epsilon$.} runs in time dominated by the exhaustive search which takes time $O(2^{1/(\epsilon \delta)^2 \log n}) = n^{1/ (\epsilon \delta)^2}$, which depends on $\epsilon, \delta$. For example, for $\epsilon = 1-0.878 = 0.122$ (to obtain the approximation ratio of the algorithm by Goemans and Williamson~\cite{goemans}) the running time is $O(n^{67})$!

Additionally, we would like to mention that if the (partial) prediction corresponds to a (global) prediction with approximation ratio $\alpha$, then our algorithm has approximation ratio at least $\alpha-\epsilon$ with high probability. This is a direct consequence of our proof. The relative (partial) error is an unbiased estimator of the relative (global) error.

\paragraph*{Robustness.}
In case the \textit{error} of our predictions is too large, we would like to be able to ensure an approximation guarantee for the value of the cut (robustness of the approximation ratio). We can do that  by running in parallel the celebrated algorithm of Goemans and Williamson~\cite{goemans} which achieves an approximation ratio of $\approx 0.878$. The algorithm can run in time $\tilde{O}(n^2)$ using the Arora-Kale algorithm~\cite{kale, trevisan}. Of course, the algorithm is randomized and we should demand that both algorithms (\textsc{LA-PTAS-Cut} and that of Goemans and Williamson) succeed simultaneously. Note that the algorithm can also be derandomized~\cite{derandomized}.

Therefore, the approximation ratio of our learning-augmented scheme is $\max \{1 - \epsilon -  8\frac{\textit{error}}{\delta |S|}, 0.878 \}$ (with probability at least $1-3/n^2$) for a $\delta$-dense graph and runs in time $O(n \cdot T_{LP})$. Note that for different values of the parameter $\epsilon>0$, the prediction \textit{error} is not the same due to the change of the sampling size.
Consequently, we restate the theorem for \textsc{LAA-Cut}.

\MaxcutTime*


\begin{proof}
    The proof becomes now trivial. Here, we just clarify the success probability of the algorithm. The algorithm of Goemans and Williamson succeeds with probability at least $\tau$, for some constant $\tau>0$. \textsc{LA-PTAS-Cut} as well, so we just have to run both algorithms a constant number of times independently to get a success with  probability at least $1-\eta$, for arbitrarily small $\eta$. We can also boost the success probability to $1-1/\Omega(n)$ by running the algorithm a logarithmic number of times independently.  
\end{proof}

\section{Smooth Polynomial Integer Programs} \label{general}
In this section, we extend the approach applied to \textsc{Max-CUT} to approximately optimize $c$-smooth polynomials of degree $d$ over all binary vectors $\textbf{x} \in \{ 0, 1\}^n$, as done in~\cite{dense}. We exploit the fact that smooth polynomial integer programs can be recursively decomposed into lower degree PIPs to eventually obtain a linear program. We can assume wlog that instead of solving the optimization problem, we can deal with the feasibility problem of a smooth polynomial integer program, i.e., given a feasible PIP find an integer solution that is approximately feasible. Our general learning-augmented algorithm \textsc{LA-PTAS} follows the same steps as the version for \textsc{Max-CUT} generalizing our approach\footnote{In this section, we focus on maximization problems. The minimization version can be handled similarly.}. Next, we will use \textsc{LA-PTAS} to get our algorithm for PIPs. 

We only briefly sketch here the general approach to build \textsc{LA-PTAS}. Details of the construction and proofs are deferred to the appendix. As shown in~\cite{dense}, each absolute value of a $c$-smooth polynomial of degree $d$ is bounded by $2cen^d$ (where $\ln e =1$). Thus, the optimal value of a PIP is not too large and we can reduce the optimization of a PIP $p(\textbf{x})$ to the feasibility version of the problem using binary search. Specifically, it is sufficient to find if the problem $p(\textbf{x}) \ge M$ for $M>0$ has a feasible solution. The parameter $M>0$ can be computed by using binary search in $(0, 2cen^d]$ taking at most $O(\log (2cen^d))=O(\log (c n^d))$ runs of the algorithm. Throughout the section we denote by $N$ the set $N=\{1,\dots,n\}$.

Furthermore, another key idea to generalize our results is the following decomposition lemma of a degree-$d$ $c$-smooth polynomial.

\begin{lemma} {\cite{dense}} \label{form}
A $c$-smooth polynomial $p$ of degree $d$ on $\textbf{x} = (x_1, \dots, x_n)$ can be written uniquely as
\begin{equation*}
    p(\textbf{x}) = t+ \sum_{i} x_i p_i(x_i, \dots, x_n) 
\end{equation*}
where $t$ is a constant and each $p_i$ is a $c$-smooth polynomial of degree $d-1$ and depends only on variables with index $i$ or greater.
\end{lemma}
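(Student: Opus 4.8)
The plan is to build the decomposition explicitly by bucketing the monomials of $p$ according to their smallest-index variable, and then to show that any representation of the prescribed shape is forced to coincide with it, which yields uniqueness. First I would write $p$ in its (unique) monomial expansion $p(\textbf{x}) = \sum_m c_m\, m$, where $m$ ranges over monomials of degree at most $d$. I send the constant monomial to $t$, i.e.\ $t = p(\textbf{0})$. For every non-constant monomial $m$ let $i(m) = \min\{\, j : x_j \text{ divides } m \,\}$ be the least index occurring in $m$; since $x_{i(m)}$ divides $m$, the quotient $m/x_{i(m)}$ is again a monomial, and I set
\begin{equation*}
    p_i(x_i,\dots,x_n) \;=\; \sum_{m \,:\, i(m) = i} c_m\, \frac{m}{x_i}.
\end{equation*}
Then $x_i\,p_i = \sum_{m:\, i(m)=i} c_m\, m$, and summing over $i$ reconstructs exactly the non-constant part of $p$, so $p = t + \sum_i x_i p_i$ as a formal polynomial identity (it does not use $x_j \in \{0,1\}$). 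Note that when a variable occurs with exponent $\geq 2$ the factor $x_i$ may survive into $m/x_i$, which is precisely why the statement allows $p_i$ to depend on $x_i$ and writes $p_i(x_i,\dots,x_n)$.

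Next I would verify the three properties of each $p_i$. The dependence condition is immediate: each monomial $m/x_i$ contributing to $p_i$ satisfies $i = i(m)$, so all its variables have index $\geq i$. The degree bound is equally direct, as $\deg(m/x_i) = \deg(m) - 1 \leq d-1$. The delicate point is \emph{smoothness}: every coefficient of $p_i$ equals some coefficient $c_m$ of $p$, and the carrier monomial $m/x_i$ has degree exactly one less than $m$. Using that the $c$-smoothness bound on a degree-$j$ coefficient of a degree-$d$ polynomial is $c\,n^{d-j}$, a degree-$(j-1)$ monomial of $p_i$ inherits the bound $c\,n^{d-j} = c\,n^{(d-1)-(j-1)}$, which is exactly the $c$-smoothness requirement for a polynomial of degree $d-1$. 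Hence each $p_i$ is $c$-smooth of degree $d-1$.

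For uniqueness, suppose $p = t' + \sum_i x_i q_i$ with each $q_i$ depending only on variables of index $\geq i$. Setting $\textbf{x} = \textbf{0}$ kills every $x_i q_i$ and forces $t' = p(\textbf{0}) = t$. For a non-constant monomial $m$ with $i = i(m)$, any term $x_{i'} q_{i'}$ that produces $m$ must have $x_{i'} \mid m$, so $i' \geq i$; but $q_{i'}$ only contributes monomials with all indices $\geq i'$, and $m$ carries the variable $x_i$, so $i' \leq i$, whence $i' = i$. Thus $m$ is produced solely by $x_i q_i$, and within $q_i$ only the monomial $m/x_i$ can yield it, so the coefficient of $m/x_i$ in $q_i$ must equal $c_m$. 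Conversely each monomial $w$ of $q_i$ gives $x_i w$ with $i(x_i w) = i$ (as $\min(w) \geq i$), so these correspondences are a bijection onto $\{\, m : i(m)=i \,\}$ with matching coefficients; hence $q_i = p_i$ for all $i$.

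The step I expect to be the real obstacle is the smoothness bookkeeping rather than the algebra: the construction and the uniqueness argument are a clean partition of monomials, but confirming that the inherited coefficients still meet the smoothness threshold after the degree drops by one is where the precise, degree-dependent form of the definition is indispensable, and it is the only place where the argument would break under a coarser notion of smoothness.
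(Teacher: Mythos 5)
Your proof is correct and is essentially the canonical argument: the paper states this lemma without proof, citing \cite{dense}, and the construction there is exactly your bucketing of monomials by their lowest-indexed variable (with the quotient $m/x_i$ retaining $x_i$ when exponents exceed one), together with the same forced-coefficient comparison for uniqueness. One remark confirming your closing observation: your smoothness bookkeeping uses the degree-dependent bound $c\,n^{d-j}$ on degree-$j$ coefficients, which is the definition in \cite{dense} and is indeed the only definition under which the lemma holds --- the uniform bound $c\cdot n^{d-1}$ appearing in this paper's definition of $c$-smoothness seems to be a typo for $c\cdot n^{d-i}$, and under that literal reading the inherited constant term of some $p_i$ could be as large as $c\,n^{d-1}$, violating the requirement for a degree-$(d-1)$ polynomial.
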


Assume that we would like to optimize $p(\textbf{x})$ of degree $d$. We transform the optimization problem into a feasibility one using binary search as discussed previously. We now have the feasibility problem $p(\textbf{x}) \ge M$ for a known $M>0$. Using the decomposition lemma we can write $p(\textbf{x})$ as $p(\textbf{x}) = t+ \sum x_i p_i(\textbf{x})$. Computing an estimate $\hat{e_i}$ of the value of $p_i(\textbf{a})$ at the optimal solution $\textbf{a}$, we replace the degree $d$ constraint $p(\textbf{x}) \ge M$ with $t+ \sum x_i \hat{e_i} \ge M$ and a family of constraints on the values $p_i(\textbf{x})$. Then, we recursively expand these degree $d-1$ constraints, continuing until all constraints are linear. We can compute the estimations $\hat{e_i}$ of $p_i(\textbf{a})$ by writing $p_i(\textbf{x}) = \sum x_j p_{ij}(\textbf{x})$. We then recursively estimate the values $p_{ij}(\textbf{a})$, and use sampling and the predicted values $\hat{a_k}, k\in S$ to estimate $p$ based on the values of $p_{ij}$. Thus we end up with an Integer Linear Program (see Appendix~\ref{appsubsec:ilp}). 

Then, we relax the integral constraints and solve the Linear Programming relaxation of ($d$-IP). Finally, we use randomized rounding to get an integral solution. The details and proofs of each step can be found in Appendices~\ref{appsubsec:B2} (Estimating Polynomials via Sampling and Predictions),~\ref{appsubsec:B3} (Transforming degree $d$ constraints into linear constraints) and~\ref{appsubsec:B4} (Randomized Rounding for Smooth Polynomials). We finally obtain the following result (see Appendix~\ref{pip_2} for the proof). In the following, $T'_{LP}$ denotes the time to solve an LP with $n$ variables and $poly(n)$ constraints.

\begin{theorem} \label{general_theorem}
    Given a feasible $c$-smooth degree-$d$ PIP with $n$ variables, its objective function $p$ and $m=poly(n)$ constraints, \textsc{LA-PTAS} finds a binary solution $\textbf{z}$ with probability at least $1/2$ such that 
    \begin{equation*}
        p(z_1,\dots,z_n) \ge |OPT| - \bigg(\epsilon+ 4ced(d-1)\frac{\textit{error}}{|S|}\bigg) n^{d},
    \end{equation*}
    given predictions on the values of the distinct variables of $S$ at the optimal solution (optimum of PIP) that is chosen uniformly at random (with replacement), where $|S| = \Theta(\frac{ c^4 f d^7}{\epsilon^3} \ln n)$, where $f>0$ is such that $n^f = \Theta(m \cdot n^d)$. The running time of the algorithm is $O\big(n \ln (cn^d) \cdot T'_{LP} \big)$.     
\end{theorem}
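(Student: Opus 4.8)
The plan is to mirror the \textsc{Max-CUT} argument of Section~\ref{maxcut} level by level, using the decomposition of Lemma~\ref{form} as the recursive engine. First I would reduce optimization to feasibility: since every $c$-smooth degree-$d$ polynomial is bounded in absolute value by $2cen^d$, a binary search over $M \in (0, 2cen^d]$ reduces optimizing $p$ to deciding feasibility of $p(\textbf{x}) \ge M$, costing an $O(\log(cn^d))$ multiplicative factor. For a fixed target $M$, I would repeatedly apply Lemma~\ref{form} to write each degree-$d'$ polynomial appearing in a constraint as $t + \sum_i x_i p_i(\textbf{x})$ with the $p_i$ of degree $d'-1$, replace the nonlinear term $\sum_i x_i p_i(\textbf{a})$ by $\sum_i x_i \hat{e_i}$ using estimates $\hat{e_i}$ of $p_i(\textbf{a})$, and add two-sided constraints bounding $p_i(\textbf{x})$ around $\hat{e_i}$. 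Recursing through all $d$ levels turns the PIP into the Integer Linear Program ($d$-IP) of Appendix~\ref{appsubsec:ilp}, whose LP relaxation I would then solve and round.

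The heart of the argument is tracking the additive error of the estimates $\hat{e_i}$ as the recursion unwinds. At each level I would invoke the generalized sampling lemma of Appendix~\ref{appsubsec:B2} to bound the gap between $\hat{e_i}$ and the true $p_i(\textbf{a})$ by two contributions: a sampling term that the choice $|S| = \Theta(c^4 f d^7 \ln n/\epsilon^3)$ drives below roughly $\epsilon n^d/d$ per level, and a prediction term of the form $\tfrac{n}{|S|}\cdot O(c)\cdot\textit{error}$ arising from $\sum_{j}(\hat{a_j}-a_j)$ scaled by the smoothness bound $2cen^{d'-1}$ on the lower-degree coefficients, exactly as in~\eqref{maxcut_3}. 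Since the recursion has depth $d$ and the per-level prediction errors add up through an arithmetic-type series over the levels (rather than compounding multiplicatively), summing them yields the announced coefficient $4ced(d-1)$ on $\tfrac{\textit{error}}{|S|}n^{d}$, while the sampling terms collectively contribute at most $\epsilon n^d$. Here I would reuse the max/min clamping of Section~\ref{maxcut_sec_2} to keep every estimated right-hand side inside its trivially valid range, so that the optimal $\textbf{a}$ remains feasible for ($d$-IP) with high probability, and the telescoping computation of~\eqref{maxcut_opt} then gives that an integral optimum of ($d$-IP) satisfies $p \ge |OPT| - (\epsilon + 4ced(d-1)\tfrac{\textit{error}}{|S|})n^{d}$.

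I would finish with randomized rounding (Appendix~\ref{appsubsec:B4}). Applying the Randomized Rounding lemma to each of the $O(m\,n^{d-1})$ linear constraints and to the objective, and taking a union bound, perturbs every polynomial value by at most $O(n^{d-1/2}\ln n)=o(n^d)$, which I absorb into the $\epsilon n^d$ slack. The exponent $f$ with $n^f=\Theta(m\,n^d)$ is chosen precisely so that the union bound over all $m=poly(n)$ constraints, together with the union bound over all (polynomially many) estimated sub-polynomials $p_i, p_{ij},\dots$, keeps the combined estimation-plus-rounding failure probability below $1/2$; hence the single run with the correct integer $\textit{error}$ and the binary-search value of $M$ nearest $|OPT|$ succeeds with probability at least $1/2$. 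Since the algorithm outputs the best solution found over all values of $M$ and all guesses of $\textit{error}$, the spurious runs can only improve, never violate, the stated guarantee, and the probability can be boosted by independent repetition as in Section~\ref{maxcut_time}.

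For the running time, the binary search over $M$ ($O(\ln(cn^d))$ iterations) together with guessing the integer $\textit{error}$ (at most $|S|\le n$ values, bounded crudely as in Section~\ref{maxcut_sec_2}) contribute a factor $O(n\ln(cn^d))$, while each resulting LP has $poly(n)$ variables and $O(m\,n^{d-1})$ constraints, solvable in $O((m\,n^{d-1})^{2.5})$ time by Vaidya's algorithm~\cite{Vaidya}; multiplying gives $O(n\ln(cn^d)(m\,n^{d-1})^{2.5})$. I expect the main obstacle to be the error-propagation bookkeeping of the second paragraph: unlike the single-level \textsc{Max-CUT} case, the estimates now feed into one another across $d$ levels, so the clean telescoping of~\eqref{maxcut_opt} must be replaced by an induction on the recursion depth in which both the smoothness constant $c$ and the degree $d$ enter at each step, and verifying that this induction produces exactly the polynomial factor $4ced(d-1)$ rather than an exponential-in-$d$ blow-up is the delicate point.
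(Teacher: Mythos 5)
Your proposal follows essentially the same route as the paper's proof: binary-search reduction to feasibility, recursive linearization via the decomposition of Lemma~\ref{form} with sampled-and-predicted estimates (the paper's \textsc{Evaluate}/\textsc{Linearize}), an induction showing the per-level sampling and prediction errors accumulate additively (quadratically in $d$, giving exactly the $4ced(d-1)$ coefficient of Lemma~\ref{error_dependence}), followed by LP relaxation, randomized rounding, and the union-bound accounting through the choice of $f$. The error-propagation "delicate point" you flag is precisely what the paper resolves in Lemmas~\ref{general_estimation} and~\ref{error_dependence}, and your sketch of it is consistent with their statements.
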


\paragraph*{LAA Framework}  \label{pip_3}
Let us now describe our algorithm \textsc{LAA-General} that relies on \textsc{LA-PTAS}, whose ratio depends on the \textit{error} while guaranteeing a fixed (polynomial) running time.

{\it Consistency \& Smoothness.} For any $\epsilon >0$, we use \textsc{LA-PTAS} with $|S| = \Theta(\frac{ c^4 f d^7}{\epsilon^3} \ln n)$ and get an algorithm that with high probability outputs a value of at least $|OPT| - \big(\epsilon+ 4ced(d-1)\frac{\textit{error}}{|S|}\big) n^{d}$ (Theorem~\ref{general_theorem}). As we will see in Section~\ref{applications}, this additive approximation leads to a multiplicative one when the instance is dense for the problems we consider. Therefore, we achieve the desired consistency and smoothness in the approximation ratio. The running time of \textsc{LA-PTAS} is $O\big(n \ln (cn^d)\cdot T'_{LP} \big)$, with no dependency on $\epsilon$ (compared to  exponential dependency in $1/\epsilon$ in the PTAS's). Furthermore, note again that our algorithm is guaranteed to have approximation ratio at least as good as the approximation quality of the (global) prediction (minus $\epsilon$).
 
{\it Robustness.}
When the \textit{error} of our predictions is too large, we can ensure an approximation guarantee for the solution value (robustness of the approximation ratio). We can achieve that by running in parallel a constant  approximation  algorithm for the given problem (if it exists in the literature). The running time, here, depends on the approximation algorithm that is used for the problem in question. If the algorithm is randomized, we take the joint probability that both algorithms succeed simultaneously.

\begin{theorem} \label{general-fixed-time}
We are given a feasible $c$-smooth degree-$d$ PIP with $n$ variables, its objective function $p$ and $m=poly(n)$ constraints. Let also $ALG$ be an algorithm  for the PIP that runs in time $T_{ALG}$ and produces a solution with cost at least $\alpha |OPT|$. For any $\epsilon >0$ with $|S| = \Theta\big(\frac{ c^4 f d^7}{\epsilon^3} \ln n\big)$, where $f>0$ is such that $n^f = \Theta(m  n^d)$, \textsc{LAA-General} runs in time $\max \{ O\big(n \ln (cn^d)\cdot T'_{LP} \big), T_{ALG}\big\}$ and outputs with high probability a solution with cost at least
$$\max\bigg\{|OPT| - \bigg(\epsilon+ 4ced(d-1)\frac{\textit{error}}{|S|}\bigg) n^{d}, \alpha |OPT| \bigg\},$$
where \emph{error} is the prediction error.
\end{theorem}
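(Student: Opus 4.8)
The plan is to mirror the structure already used for \textsc{LAA-Cut} (Theorem~\ref{MaxcutTime}): the algorithm \textsc{LAA-General} runs \textsc{LA-PTAS} and $ALG$ in parallel and returns the better of the two solutions produced. Because the reported value is the maximum of the two values, it suffices to argue that, with high probability, the \textsc{LA-PTAS} branch attains $|OPT| - \big(\epsilon + 4ced(d-1)\frac{\textit{error}}{|S|}\big)n^d$ while the $ALG$ branch attains $\alpha|OPT|$; the better of the two is then at least $\max\big\{|OPT| - \big(\epsilon + 4ced(d-1)\frac{\textit{error}}{|S|}\big)n^d,\; \alpha|OPT|\big\}$. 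The running time is immediate: the two branches run independently, so the total time is $\max\{O(n\ln(cn^d)(mn^{d-1})^{2.5}), T_{ALG}\}$, up to the amplification overhead discussed next.

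The one point requiring care is probability amplification. Theorem~\ref{general_theorem} guarantees that \textsc{LA-PTAS} meets its additive bound only with probability at least $1/2$, so the ``with high probability'' claim in the statement is not free. I would therefore run \textsc{LA-PTAS} independently several times on the same sample $S$ and keep the best solution found: repeating a constant number of times drives the failure probability below any fixed $\eta>0$ (which keeps the stated running time intact, the constant being absorbed into the $O$), and repeating $\Theta(\log n)$ times drives it below $1/\mathrm{poly}(n)$ at the cost of an extra logarithmic factor. If $ALG$ is randomized with constant success probability, I would amplify it in the same way; if it is deterministic, its guarantee holds unconditionally and no repetition is needed.

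To combine the two guarantees, let $E_1$ be the event that the best \textsc{LA-PTAS} run achieves its additive bound and $E_2$ the event that $ALG$ achieves $\alpha|OPT|$. A union bound over the (at most two) amplified subroutines shows that $E_1 \cap E_2$ occurs with high probability. On this event the returned solution, being the better of the two, has value at least the claimed maximum. The dependence on \textit{error}, the consistency and smoothness behaviour, and the prescribed sample size $|S| = \Theta\big(\frac{c^4 f d^7}{\epsilon^3}\ln n\big)$ are all inherited verbatim from Theorem~\ref{general_theorem}, so no new estimation is needed here.

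The main (and essentially only) obstacle is this amplification step. Unlike \textsc{LA-PTAS-Cut}, whose single run already succeeds with probability $1-3/n^2$, the general recursive scheme of Theorem~\ref{general_theorem} succeeds only with constant probability, so the high-probability statement must be \emph{purchased} by independent repetition and a union bound with $ALG$. Everything else --- the parallel composition, the elementwise maximum over the two outputs, and the running-time accounting --- is routine bookkeeping that carries over directly from the \textsc{Max-CUT} case.
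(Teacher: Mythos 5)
Your proposal is correct and follows essentially the same route as the paper: \textsc{LAA-General} runs \textsc{LA-PTAS} and $ALG$ in parallel, outputs the better solution, and boosts the constant success probability of Theorem~\ref{general_theorem} (and of $ALG$, if randomized) by independent repetition plus a union bound, exactly as the paper does explicitly in its proof of the \textsc{LAA-Cut} theorem. Your explicit handling of the amplification step is, if anything, slightly more careful than the paper's own exposition, which states ``with high probability'' while citing the probability-$1/2$ guarantee and leaves the repetition argument implicit for the general case.
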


\section{Applications} \label{applications}
In this section, we explain how to apply the algorithm of Section~\ref{general}. We give the example of \textsc{Max}-$k$-\textsc{SAT}, while problems \textsc{Max-DICUT}, \textsc{Max-HYPERCUT}($d$) and $k$-\textsc{Densest Subgraph} are deferred to Appendix~\ref{appsec:app}. Note also that the algorithm can be applied to the more general \textsc{Max}-$k$-\textsc{CSP}, as shown in~\cite{dense}.

\paragraph*{\textsc{Max}-$k$-\textsc{SAT}}

A standard arithmetization technique (see~\cite{dense}) can be used to reduce any instance
of \textsc{Max}-$k$-\textsc{SAT} with $n$ variables to solving a degree-$k$ polynomial $p(\textbf{x})$ such that the optimal truth assignment for \textsc{Max}-$k$-\textsc{SAT} corresponds to an $\textbf{a} \in \{0,1\}^n$ that maximizes $p(\textbf{x})$. Moreover, the value of the
optimal \textsc{Max}-$k$-\textsc{SAT} solution is equal to $p(\textbf{a})$.

Let us now assume that the number of clauses is $m \ge \delta n^k$. The number of clauses of size exactly $k$ is $m-O(n^{k-1})$ and a random assignment satisfies each one of them with probability $1-2^{-k}$. Thus it follows that the maximum number of clauses that can be made true is
\begin{equation*}
    |OPT| \ge (1-2^{-k})(m-O(n^{k-1})).
\end{equation*}
Therefore, we use \textsc{LA-PTAS} with $\epsilon' = O(\epsilon /2^k)$ and get the desired accuracy for \textsc{Max}-$k$-\textsc{SAT}. 
For robustness, we can use the poly-time randomized $0.797$ approximation~\cite{avidor} to robustify, as done by \textsc{LAA-General}.

\section*{Acknowledgement}
This work was partially funded by the grant ANR-19-CE48-0016 from the French National Research Agency (ANR).

\bibliography{bibliography}

\newpage
\appendix

\section{Missing material of Section~\ref{general}}

\subsection{Integer Linear Program}\label{appsubsec:ilp}

\begin{gather*}
    \sum_{j \in N} x_j \hat{e_j} \ge M  \qquad \text{($d$-IP)} \\
    t_{i_1} + \sum_{j \in N} x_j \hat{e}_{i_1 j} \in \hat{e}_{i_1} \pm f( \textit{error}, \epsilon, \delta) n^{d-1} \qquad \forall i_1 \in N \\
    t_{i_1 i_2} + \sum_{j \in N} x_j \hat{e}_{i_1 i_2 j} \in \hat{e}_{i_1 i_2} \pm f(\textit{error}, \epsilon, \delta) n^{d-2} \\ \forall (i_1, i_2) \in N \times N \\
    \dots \\
    t_{i_1 \dots i_{d-\ell}} + \sum_{j \in N} x_j \hat{e}_{i_1 \dots i_{d-\ell} j} \in \hat{e}_{i_1 \dots i_{d-\ell}} \pm f(\textit{error}, \epsilon, \delta) n^{d-\ell} \\
    \forall (i_1,\dots, i_{d-\ell})\in N^{d-\ell} \\
    \dots \\
    t_{i_1 \dots i_{d-1}} + \sum_{j \in N} x_j \hat{e}_{i_1 \dots i_{d-1} j} \in \hat{e}_{i_1 \dots i_{d-1}} \pm f(\textit{error}, \epsilon, \delta) n \\
    \forall (i_1,\dots, i_{d-1})\in N^{d-1} \\
    x_j \in \{0,1\} \qquad \forall j \in N.
\end{gather*}

\subsection{Estimating Polynomials via Sampling and Predictions}\label{appsubsec:B2}
We show how to estimate the coefficients $p_i(\textbf{a})$ at an optimal solution that are smooth polynomials of degree at most $d-1$ using sampling and predictions. This step is required to be able to replace the constraint on $p(\textbf{x})$ by linear constraints. We describe an algorithm \textsc{Evaluate}, adaptation of the one in~\cite{dense}, which can approximate the value of a $c$-smooth degree-$d$ polynomial $p(x_1,\dots, x_n)$ on any unknown binary vector $\textbf{a} = (a_1, \dots, a_n)$ given a partial prediction about $\textbf{a}$. So, we take a sample $S$ from $\{1, \dots, n\}$, uniformly at random and with replacement. The sample size is $O(\log n)$. Next, we get a  predicted value $\hat{a_j}$ for each value $a_j, \forall j \in S$ at the optimal solution $\textbf{a}$ (note that the number of predicted values is at most $|S|$). Showing that \textsc{Estimate} (Algorithm~\ref{algo:evaluate}) provides us with good estimates for the coefficients $p_i(\textbf{a})$ becomes easier using the following General Sampling lemma of~\cite{fotakis}.

\begin{lemma}{General Sampling lemma~\cite{fotakis}} \label{general_sampling}
    Let $\textbf{a} \ in \{0,1\}^n$ and let $(\rho_j)_{j \in N}$ be any sequence such that for some integer $d \ge 0$ and some constant $\beta \ge 1$, $|\rho_j| \le (d+1) \beta n^d$, $\forall j \in N$. For all integers $f \ge 1$ and $\epsilon > 0$, let $g = \Theta(fd\beta/\epsilon^3)$ and $S$ a multiset of $|S| = g \ln n$ indices chosen uniformly at random with replacement from $N$. If $(n/|S|) \sum_{j \in S} \rho_j a_j$, $\rho = \sum_{j \in N} \rho_j a_j$ and $\bar{\rho} = \sum_{j \in N} |\rho_j|$, with probability at least $1-4/n^{f+1}$,
    \begin{equation*}
        \rho - \epsilon \bar{\rho} - \epsilon n^{d+1} \le (n/|S|) \sum_{j \in S} \rho_j a_j \le \rho + \epsilon \bar{\rho}+\epsilon n^{d+1}.
    \end{equation*}
\end{lemma}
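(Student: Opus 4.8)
The plan is to treat the quantity $\hat{\rho} := (n/|S|)\sum_{j \in S}\rho_j a_j$ as an unbiased estimator of $\rho$ and to prove concentration with a Bernstein-type (multiplicative-plus-additive Chernoff) bound. Write the samples as i.i.d.\ uniform indices $j_1,\dots,j_{|S|}$ and set $X_t = \rho_{j_t} a_{j_t}$. Because $a_{j_t}\in\{0,1\}$ and $|\rho_j|\le (d+1)\beta n^d =: M$, each $X_t$ lies in $[-M,M]$ and has mean $\frac1n\sum_{j\in N}\rho_j a_j = \rho/n$, so $\hat\rho = \frac{n}{|S|}\sum_t X_t$ has mean exactly $\rho$. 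The reason a plain additive (Hoeffding) estimate does not suffice is that it would force $|S|$ to grow like $\beta^2$; recovering the multiplicative slack $\epsilon\bar{\rho}$ is what keeps the dependence on $\beta$ linear, matching the stated $g = \Theta(fd\beta/\epsilon^3)$.

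To get the multiplicative term I would split each coefficient into its positive and negative parts, $\rho_j = \rho_j^+ - \rho_j^-$ with $|\rho_j| = \rho_j^+ + \rho_j^-$, and estimate $\rho^+ := \sum_{j} \rho_j^+ a_j$ and $\rho^- := \sum_j \rho_j^- a_j$ separately through $\hat\rho^{\pm} := \frac{n}{|S|}\sum_t \rho_{j_t}^{\pm} a_{j_t}$. Focus on $\hat\rho^+$, a scaling of a sum of i.i.d.\ variables in $[0,M]$; normalising, let $W = \frac1M\sum_t \rho_{j_t}^+ a_{j_t}$, a sum of $[0,1]$ variables with mean $\mu = \frac{|S|}{nM}\rho^+$. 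The target bound $|\hat\rho^+ - \rho^+| \le \epsilon\rho^+ + \tfrac{\epsilon}{2}n^{d+1}$ translates, in the $W$-scale, into the deviation $\lambda := \epsilon(\mu + \mu_0)$, where the additive part dictates the threshold $\mu_0 := \tfrac{|S|}{nM}\cdot\tfrac12 n^{d+1} = \tfrac{|S|}{2(d+1)\beta}$.

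Applying Bernstein's inequality to $W$ gives $\Pr[\,|W-\mu|\ge\lambda\,]\le 2\exp\!\big(-\lambda^2/(2\mu+\tfrac23\lambda)\big)$. Using $\lambda\ge\epsilon\mu$ and $\lambda\ge\epsilon\mu_0$ together with $\epsilon\le 1$, the denominator is $O(\lambda/\epsilon)$, so the exponent is $\Omega(\epsilon\lambda)=\Omega(\epsilon^2\mu_0)=\Omega\!\big(\epsilon^2 |S|/((d+1)\beta)\big)$. Substituting $|S| = g\ln n$ with $g=\Theta(fd\beta/\epsilon^3)$, the factor $\beta$ cancels and the exponent becomes $\Omega\!\big(fd\ln n/((d+1)\epsilon)\big) = \Omega(f\ln n)$; choosing the hidden constant in $g$ large enough makes each of the two tails fail with probability at most $1/n^{f+1}$.

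Running the same argument for $\hat\rho^-$ and taking a union bound over the four tail events (total failure probability at most $4/n^{f+1}$), on the good event $\hat\rho = \hat\rho^+ - \hat\rho^-$ satisfies, by the triangle inequality, $|\hat\rho - \rho| \le \epsilon(\rho^+ + \rho^-) + \epsilon n^{d+1}$. Since $\rho^+ + \rho^- = \sum_j |\rho_j| a_j \le \sum_j |\rho_j| = \bar{\rho}$, this is exactly $\rho - \epsilon\bar{\rho} - \epsilon n^{d+1} \le \hat\rho \le \rho + \epsilon\bar{\rho} + \epsilon n^{d+1}$, as claimed. The one delicate point is the regime where $\rho^+$ (or $\rho^-$) is tiny, so that a purely multiplicative bound is vacuous: the additive slack $\epsilon n^{d+1}$, encoded by the threshold $\mu_0$, is precisely what rescues this case, and the main care is in calibrating $\mu_0$ and the constant in $g=\Theta(fd\beta/\epsilon^3)$ so that the two error terms balance and $\beta$ cancels to first order.
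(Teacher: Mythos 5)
Your proposal is correct, but there is nothing in the paper to compare it against: the paper states this General Sampling lemma as an imported result from~\cite{fotakis} and never proves it (it is used as a black box in Lemma~\ref{general_estimation}, and via its \textsc{Max-CUT} specialization, Lemma~\ref{sampling_lemma_1}). So your proof stands or falls on its own, and on its own it stands.

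Checking the details: the estimator $\hat\rho=(n/|S|)\sum_{j\in S}\rho_j a_j$ is indeed unbiased; splitting $\rho_j=\rho_j^+-\rho_j^-$ and viewing $W^{\pm}=\frac{1}{M}\sum_t \rho^{\pm}_{j_t}a_{j_t}$ as sums of i.i.d.\ $[0,1]$ variables lets you use $\mathrm{Var}\le\mathrm{mean}$, so the Bernstein denominator $2\mu+\tfrac{2}{3}\lambda$ is valid; the additive slack translates to the threshold $\mu_0=\tfrac{|S|}{2(d+1)\beta}$ exactly as you compute; and the exponent $\Omega(\epsilon\lambda)\ge\Omega(\epsilon^2\mu_0)=\Omega\bigl(fd\ln n/((d+1)\epsilon)\bigr)$ dominates $(f+1)\ln n$ once the hidden constant in $g$ is chosen large enough, so each of the four tail events has probability at most $n^{-(f+1)}$, and the triangle inequality together with $\rho^++\rho^-\le\bar\rho$ yields the claimed interval. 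In fact your argument proves something slightly stronger than stated: $\Theta\bigl(f(d+1)\beta\ln n/\epsilon^2\bigr)$ samples already suffice, so the $1/\epsilon^3$ in the lemma is generous; your observation that plain Hoeffding would force a $\beta^2$ (and $(d+1)^2$) dependence is also accurate and is the right justification for going through a variance-based bound. Two caveats, both artifacts of how the lemma is phrased rather than gaps in your reasoning, should be made explicit: (i) bounding the Bernstein denominator by $O(\lambda/\epsilon)$ uses $\epsilon\le 1$, whereas the statement nominally allows any $\epsilon>0$; (ii) concluding $\Omega\bigl(fd\ln n/((d+1)\epsilon)\bigr)=\Omega(f\ln n)$ requires $d\ge 1$ --- for $d=0$ the prescription $g=\Theta(fd\beta/\epsilon^3)$ in the statement is itself degenerate and must be read as $\Theta(f(d+1)\beta/\epsilon^3)$, after which your bound goes through verbatim.
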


\begin{algorithm}
    \caption{\textsc{Evaluate}($p, S, \{ \hat{a_i}: i\in S \}$)}
    \label{algo:evaluate}
\begin{algorithmic}
    \REQUIRE polynomial $p$ of degree at most $d$, \\
    set of variables indices $S$,\\
    predictions $\hat{a_i}$ for $i \in S$.    
    \ENSURE Estimate for $p(a_1, \dots, a_n)$.
    \IF{$\text{deg($p$)} = 0$} 
        \STATE \textbf{return} $p$
    \ELSE
        \STATE $p(x_1, \dots, x_n) = t + \sum x_i p_i(x_1, \dots, x_n)$
        \FOR{$i \in S$}
            \STATE $\hat{e_i} \gets \textsc{Evaluate}(p_i, S,\{ \hat{a_i}: i\in S \})$
        \ENDFOR
        \STATE \textbf{return } $t+ (n/|S|) \sum_{i\in S} \hat{a_i} \hat{e_i}$
    \ENDIF
\end{algorithmic}
\end{algorithm}

Let us now show the following lemma about \textsc{Evaluate} with set $S$ and the predictions.

\begin{lemma} \label{general_estimation}
    Let $p$ be a $c$-smooth degree-$d$ polynomial in $n$ variables $x_i$ and $\textbf{a}=(a_1, \dots, a_n) \in \{0,1\}^n$. Let $f\ge 1$ be an integer, $\epsilon >0$, $\beta \ge \max\{1,2ce\}$, and $S$ be a set of $O(g \ln n)$ indices chosen randomly and with replacement with $g=\Theta(fd\beta/\epsilon^3)$. Also, let $\hat{a_j} ,\forall j \in S$ be a prediction on the values of $a_j, \forall j \in S$. Then, with probability at least $1-4/n^{f+1-d}$, set $S$ is such that \textsc{Evaluate}$(p,S, \{ \hat{a_i}: i\in S\})$ returns a value in  
    \begin{equation}
        p(a_1,\dots,a_n) \pm \bigg((2ce+1) d \epsilon+ 2ced\frac{\textit{error}}{|S|}\bigg) n^d.
    \end{equation}
\end{lemma}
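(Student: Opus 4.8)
The plan is to prove the statement by induction on the degree $d$, mirroring the recursive structure of \textsc{Evaluate}. For the base case $d=0$ the polynomial is the constant $t$, \textsc{Evaluate} returns it exactly, the claimed additive error is $0$, and the statement holds trivially with failure probability $0 \le 4/n^{f+1}$. For the inductive step I would fix the decomposition $p(\textbf{x}) = t + \sum_i x_i p_i(\textbf{x})$ from Lemma~\ref{form}, where each $p_i$ is a $c$-smooth polynomial of degree $d-1$, and recall the bound $|p_i(\textbf{a})| \le 2ce\,n^{d-1}$ on values of smooth polynomials on binary inputs. I would also note that the sample size $|S|$ we use, driven by $g = \Theta(fd\beta/\epsilon^3)$, dominates the size required by the General Sampling Lemma at \emph{every} recursion level $d' \le d$, so all invocations are valid with the single sample $S$.

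The heart of the argument is a three-step telescoping between the true value $p(\textbf{a}) = t + \sum_{i\in N} a_i p_i(\textbf{a})$ and the returned value $V = t + \tfrac{n}{|S|}\sum_{i\in S} \hat{a_i}\hat{e_i}$. Introducing the intermediate quantities $B = t + \tfrac{n}{|S|}\sum_{i\in S} a_i p_i(\textbf{a})$ (replacing the full sum by the sampled one) and $C = t + \tfrac{n}{|S|}\sum_{i\in S} \hat{a_i} p_i(\textbf{a})$ (swapping true labels for predicted ones), I would bound $|p(\textbf{a})-V| \le |p(\textbf{a})-B| + |B-C| + |C-V|$. The first term is a sampling error: applying Lemma~\ref{general_sampling} to the sequence $\rho_i = p_i(\textbf{a})$ with its degree parameter set to $d-1$ (so that $|\rho_i| \le 2ce\,n^{d-1} \le d\beta n^{d-1}$, using $\beta \ge 2ce$), together with $\bar{\rho} = \sum_i |p_i(\textbf{a})| \le 2ce\,n^{d}$, yields $|p(\textbf{a})-B| \le \epsilon\bar{\rho} + \epsilon n^{d} \le (2ce+1)\epsilon n^{d}$ with probability at least $1 - 4/n^{f+1}$. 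The second term is the prediction error: since $\sum_{i\in S}|a_i-\hat{a_i}| = \textit{error}$, we get $|B-C| \le \tfrac{n}{|S|}\cdot 2ce\,n^{d-1}\cdot\textit{error} = 2ce\,n^{d}\tfrac{\textit{error}}{|S|}$. The third term is the recursion error: since $\hat{a_i}\in\{0,1\}$ gives $|\hat{a_i}|\le 1$, we have $|C-V| \le \tfrac{n}{|S|}\sum_{i\in S}|p_i(\textbf{a})-\hat{e_i}|$, and the inductive hypothesis applied to each degree-$(d-1)$ polynomial $p_i$ (with the same sample and the same \textit{error}) bounds each summand by $\big((2ce+1)(d-1)\epsilon + 2ce(d-1)\tfrac{\textit{error}}{|S|}\big)n^{d-1}$; multiplying by $\tfrac{n}{|S|}\cdot|S| = n$ gives $\big((2ce+1)(d-1)\epsilon + 2ce(d-1)\tfrac{\textit{error}}{|S|}\big)n^{d}$.

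Summing the three contributions telescopes exactly to $\big((2ce+1)d\epsilon + 2ced\tfrac{\textit{error}}{|S|}\big)n^{d}$, matching the claim. For the probability, I would union-bound the single top-level sampling failure (at most $4/n^{f+1}$) against the $|S|$ recursive calls, each failing with probability at most $4/n^{f+1-(d-1)} = 4/n^{f+2-d}$ by the inductive hypothesis; the total is at most $4/n^{f+1} + |S|\cdot 4/n^{f+2-d} \le 4/n^{f+1-d}$, where the last inequality reduces to $1 + |S|\,n^{d-1} \le n^{d}$, valid because $|S| = O(\log n) \le n-1$ for $n$ large enough. The main obstacle is precisely this probability bookkeeping across the recursion tree: one must be careful to invoke the General Sampling Lemma at the \emph{current} degree level (so its guarantee scales as $n^{d}$, not $n^{d-1}$, relative to the polynomial being evaluated) and to verify that the failure probability inflates by exactly one factor of $n$ per level, which is what the slack $|S|\le n$ in the union bound supplies. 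The error decomposition itself is routine once the intermediate quantities $B$ and $C$ are chosen correctly.
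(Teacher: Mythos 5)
Your proof is correct and takes essentially the same route as the paper's: induction on the degree, the decomposition $p(\textbf{x}) = t + \sum_i x_i p_i(\textbf{x})$, the General Sampling lemma applied with degree parameter $d-1$ for the top-level sampling error, the bound $2ce\,n^{d}\,\textit{error}/|S|$ for swapping true labels for predicted ones, and the inductive hypothesis for the recursive estimates $\hat{e_i}$; your intermediate quantities $B$ and $C$ merely make explicit the chained interval inclusions the paper writes directly. If anything, your probability bookkeeping is slightly cleaner than the paper's, since you union-bound over the $|S|$ recursive calls and verify rigorously that the total failure probability is at most $4/n^{f+1-d}$, whereas the paper bounds over $n$ calls and settles for an approximate ``$\approx$'' at that step.
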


\begin{proof}
    The proof is by induction on the degree $d$. For the case $d=0$ we have by definition that $\textit{error}=0$ and \textsc{Evaluate} returns a value that is exactly $p(\textbf{a})=t$.

    For the inductive step let $\rho_i = p_i(a_1, \dots, a_n)$. So,
    \begin{equation*}
        p(\textbf{a}) = t + \sum_{i=1}^n a_i \rho_i.
    \end{equation*}
    Note that each $p_i$ has degree at most $d-1$.
    
    First, we apply the General Sampling lemma (Lemma~\ref{general_sampling}) for $p(\textbf{a})$ with $d' = d-1$, $\beta \ge \max\{1,2ce\}$, $g = \Theta(f d \beta / \epsilon^3)$ and $|S| = g \ln n$. As each $p_i$ is a $c$-smooth degree-$(d-1)$ polynomial we have that $|\rho_i| \le 2ce n^{d-1}\leq (d'+1)\beta n^{d'}$. Then, we have with probability at least $1-4/n^{f+1}$ that
    \begin{equation*}
        (n/|S|) \sum_{i \in S} a_i \rho_i \in \sum_{i \in N} a_i \rho_i \pm (\epsilon \sum_{i \in N} |\rho_i| + \epsilon n^{d}).
    \end{equation*}

    Using again $|\rho_i| \le 2ce n^{d-1}$  we get:
    \begin{equation} \label{prob_1}
        (n/|S|) \sum_{i \in S} a_i \rho_i \in \sum_{i \in N} a_i \rho_i \pm \epsilon (2ce + 1) n^d.
    \end{equation}

    Given the predictions $(\hat{a_j} \in \{0, 1\}, \forall j \in S$ we have that $\textit{error}_j = |\hat{a_j} - a_j|, \forall j \in S$. Note that $\textit{error} = \sum_{i \in S} \textit{error}_i$. Using $\hat{a_j} \in a_j \pm \textit{error}_j, \forall j \in S$, we have that 
    \begin{gather}
        \frac{n}{|S|} \sum_{i \in S} \hat{a_i} \rho_i \in \frac{n}{|S|} \sum_{i \in S} a_i \rho_i \pm \frac{n}{|S|} \sum_{i \in S} \textit{error}_i \rho_i \nonumber\\
        \subseteq \frac{n}{|S|} \sum_{i \in S} a_i \rho_i \pm \frac{n}{|S|} 2cen^{d-1} \sum_{i\in S} \textit{error}_i \nonumber \\
        = \frac{n}{|S|} \sum_{i \in S} a_i \rho_i \pm \frac{2ce}{|S|} \textit{error} \cdot n^{d}. \label{eq_pred}
    \end{gather}

    Let $\hat{e_i} = \textsc{Evaluate}(p_i, S, \{\hat{a_i: i \in S})$.
     By the inductive hypothesis, \textsc{Evaluate} outputs estimates $\hat{e_i}$ such that
     \begin{gather}         
         \rho_i \in \hat{e_i} \pm \left((2ce+1)(d-1) \epsilon n^{d-1} + 2ce(d-1)\frac{\textit{error}}{|S|} n^{d-1}\right) \nonumber\\ 
         \text{or equivalently} \nonumber\\
         \hat{e_i} \in \rho_i \pm \left((2ce+1)(d-1) \epsilon n^{d-1} + 2ce(d-1)\frac{\textit{error}}{|S|} n^{d-1}\right) \label{eq_est}
     \end{gather}
     with probability at least $1-4/n^{f+1-(d-1)}=1-4/n^{f+2-d}$. Taking the union bound all $n$, values $\rho_i$ are (simultaneously) estimated to within this bound with probability at least $$1-n\cdot 4/n^{f+2-d} =1-4/n^{f+1-d}.$$ So, together with \eqref{prob_1} we get with probability at least $1-4/n^{f+1-d}-4/n^{f+1} \approx 1 - 4/n^{f+1-d}$ the following:
     \begin{gather*}
         t+\frac{n}{|S|}\sum_{i \in S} \hat{a_i} \hat{e_i} \\
         \in t
         +\frac{n}{|S|} \sum_{i \in S} \hat{a_i} \big(\rho_i \pm (2ce+1)(d-1) \epsilon n^{d-1} + 2ce(d-1)\frac{\textit{error}}{|S|} n^{d-1}\big)
         \text{, by~\eqref{eq_est},}\\
        \subseteq t
         +\frac{n}{|S|} \sum_{i \in S} \hat{a_i} \rho_i \pm  (2ce+1)(d-1) \epsilon n^d + 2ce(d-1)\frac{\textit{error}}{|S|} n^d 
         \text{, due to $\sum_{i \in S} \hat{a_i} \le |S| \le n$,}\\
         \subseteq t
         +\frac{n}{|S|} \sum_{i \in S} a_i \rho_i \pm 2ce\frac{\textit{error}}{|S|} n^d  \pm  (2ce+1)(d-1) \epsilon n^d + 2ce(d-1)\frac{\textit{error}}{|S|} n^d
         \text{, by \eqref{eq_pred}.} 
     \end{gather*}

     So, we get that 
     \begin{gather*}
         t+\frac{n}{|S|}\sum_{i \in S} \hat{a_i} \hat{e_i} \\
         \subseteq t
         +\frac{n}{|S|} \sum_{i \in S} a_i \rho_i \pm  (2ce+1)(d-1) \epsilon n^d + 2ced\frac{\textit{error}}{|S|} n^d \\
         \subseteq t
         + \sum_{i \in N} a_i \rho_i \pm (2ce+1)\epsilon n^d \pm  (2ce+1)(d-1) \epsilon n^d + 2ced\frac{\textit{error}}{|S|} n^d 
         \text{, by~\eqref{prob_1},}\\
         \subseteq t
         + \sum_{i \in N} a_i \rho_i \pm  (2ce+1) d \epsilon \cdot n^d + 2ced\frac{\textit{error}}{|S|} \cdot n^d.
     \end{gather*}
\end{proof}

\subsection{Transforming degree $d$ constraints into linear constraints}\label{appsubsec:B3}

First, let us observe that the previous proof for \textsc{Evaluate} shows implicitly that \textsc{Evaluate} estimates the values of all polynomials arising from the decomposition of a polynomial $p$ with the described accuracy with probability at least $1-4/n^{f+1-d}$. Specifically, it estimates every polynomial of degree $d'$ to within $\big((2ce+1) d' \epsilon+ 2ced'\frac{\textit{error}}{|S|}\big) n^{d'}$.

\begin{algorithm}
    \caption{\textsc{Linearize}\big($L \le p(\textbf{x}) \le U, S$,
    $\{ \hat{a_i}: i\in S \}, \epsilon$\big)}
    \label{algo:linearize}
\begin{algorithmic}
    \REQUIRE constraint involving polynomial $p$ of degree $d$, \\
    set of variables indices $S$,\\
    predictions $\hat{a_i}$ for $i \in S$\\
    parameter $\epsilon>0$.    
    \ENSURE A set of linear constraints.
    \IF{$p$ is linear} 
        \STATE \textbf{output} the input constraint $L \le p(\textbf{x}) \le U$
    \ELSE 
        \STATE Out $\leftarrow$ $\emptyset$
        \STATE $p(x_1, \dots, x_n) = t + \sum x_i p_i(x_i, \dots, x_n)$
        \FOR{$i \in \{1, 2, \dots, n\}$}
            \STATE $\hat{e_i} \gets \textsc{Evaluate}(p_i, S,\{ \hat{a_i}: i\in S \})$\\
            $l_i \gets \hat{e_i} - \big((2ce+1) (d-1) \epsilon+ 2ce(d-1)\frac{\textit{error}}{|S|}\big) n^{d-1}$\\
            $u_i \gets \hat{e_i} + \big((2ce+1) (d-1) \epsilon+ 2ce(d-1)\frac{\textit{error}}{|S|}\big) n^{d-1}$\\
            Out$\leftarrow$ Out $\cup$ \textsc{Linearize}\big($l_i \le p_i(x_i,\dots,x_n) \le u_i$,
            $S, \{ \hat{a_i}: i\in S \}, \epsilon$\big)
        \ENDFOR
        \STATE \textbf{output } Out $\cup \{$ 
        \STATE  $t+\sum x_i \hat{e_i} \ge L - \big((2ce+1) d \epsilon+ 2ced\frac{\textit{error}}{|S|}\big) n^d $,      
        \STATE $t+\sum x_i \hat{e_i} \le U + \big((2ce+1) d \epsilon+ 2ced\frac{\textit{error}}{|S|}\big) n^d \}$
    \ENDIF
\end{algorithmic}
\end{algorithm}

We now use a modified version of algorithm \textsc{Linearize} (see Algorithm~\ref{algo:linearize}) given in~\cite{dense} to transform any polynomial constraint into a family of linear constraints. \textsc{Linearize} is a recursive algorithm that uses \textsc{Evaluate} to output linear constraints. It is easy to see that \textsc{Linearize} outputs a set of at most $2 n^{d-1}$ linear constraints such that the optimal solution $\textbf{a}$ satisfies all constraints and $\textbf{a}$ is a feasible solution to ($d$-IP), as long as \textsc{Evaluate} estimates all polynomials with the required accuracy. This happens since the decompositions of the polynomials are unique and common between the two algorithms. Thus, by the observation stated previously, we have that with probability at least $1-4d/ n^{f+1-d}$ the linear constraints output by \textsc{Evaluate} are jointly feasible. Let us now see why (using induction on the degree $d$). It is obviously true for $d=1$. Assume it is for $d-1\geq 1$. Then for a polynomial $p$ of degree $d$, \textsc{Linearize} outputs at most $n$ sets of constraints associated to polynomial of degree $d-1$, and two ``new'' constraints associated to $p$. By union bound (and recursive argument), they are simultaneously satisfied with probability at least $1-n4(d-1)/n^{f+1-(d-1)}-4/n^{f+1-d}=1-4d/n^{f+1-d}$.

Next, we show (as in~\cite{dense}) that any feasible solution to the linear system output by \textsc{Evaluate} is also an approximate solution to the input constraint (degree $d$ polynomial constraint). 

\begin{lemma} \label{error_dependence}
    Every feasible solution $(y_i) \in [0,1]^n$ to the set of linear constraints output by \textsc{Linearize} satisfies (without any assumption on the success of the sampling for set $S$):
    \begin{equation*}
        p(\textbf{y}) \in [L,U] \pm \bigg((4ce+2) d(d-1) \epsilon+ 4ced(d-1)\frac{\textit{error}}{|S|}\bigg) n^{d}.
    \end{equation*}
\end{lemma}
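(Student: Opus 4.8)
The plan is to prove the statement by induction on the degree $d$ of $p$, exploiting the recursive structure of \textsc{Linearize}. The crucial feature is that the claim is a purely \emph{deterministic} statement about any point $\textbf{y}\in[0,1]^n$ feasible for the linear system: it makes no reference to whether the estimates $\hat{e_i}$ produced by \textsc{Evaluate} are accurate. This is possible because the slacks in the constraints $l_i\le p_i(\textbf{x})\le u_i$ and in the two ``outer'' constraints on $t+\sum_i x_i\hat{e_i}$ are defined in terms of the very same $\hat{e_i}$, so the system is internally consistent by construction. For the base case $d=1$ the polynomial $p$ is already linear, \textsc{Linearize} outputs exactly $L\le p(\textbf{y})\le U$, and the claimed additive term vanishes since it carries the factor $(d-1)=0$.

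For the inductive step I would fix $d\ge 2$ and write $p(\textbf{x})=t+\sum_i x_i p_i(\textbf{x})$ via the decomposition Lemma~\ref{form}, where each $p_i$ is $c$-smooth of degree $d-1$. \textsc{Linearize} recurses on each constraint $l_i\le p_i(\textbf{x})\le u_i$, so the induction hypothesis applied to these degree-$(d-1)$ constraints gives, for every feasible $\textbf{y}$, that $p_i(\textbf{y})\in[l_i,u_i]\pm\Delta_{d-1}$, where $\Delta_{d-1}$ is the degree-$(d-1)$ bound. Since $[l_i,u_i]=\hat{e_i}\pm\delta_{d-1}$ with $\delta_{d-1}=\big((2ce+1)(d-1)\epsilon+2ce(d-1)\tfrac{\textit{error}}{|S|}\big)n^{d-1}$, this localizes each $p_i(\textbf{y})$ within $\hat{e_i}\pm(\delta_{d-1}+\Delta_{d-1})$. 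Using $0\le y_i\le 1$ and summing the $n$ terms then yields $\sum_i y_i p_i(\textbf{y})\in\sum_i y_i\hat{e_i}\pm n(\delta_{d-1}+\Delta_{d-1})$, hence $p(\textbf{y})\in\big(t+\sum_i y_i\hat{e_i}\big)\pm n(\delta_{d-1}+\Delta_{d-1})$. Finally I would invoke the two outer linear constraints, which place $t+\sum_i y_i\hat{e_i}$ inside $[L,U]\pm\delta_d$ with $\delta_d=\big((2ce+1)d\epsilon+2ced\tfrac{\textit{error}}{|S|}\big)n^d$, and combine the two displacements to conclude $p(\textbf{y})\in[L,U]\pm(\delta_d+n\delta_{d-1}+n\Delta_{d-1})$.

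The main obstacle is purely the bookkeeping needed to show that the accumulated half-width $\delta_d+n\delta_{d-1}+n\Delta_{d-1}$ is at most the target $\Delta_d=\big((4ce+2)d(d-1)\epsilon+4ced(d-1)\tfrac{\textit{error}}{|S|}\big)n^d$. Collecting the $\epsilon n^d$ terms factors as $(2ce+1)\big[d+(d-1)+2(d-1)(d-2)\big]=(2ce+1)(2d^2-4d+3)$, and likewise the $\tfrac{\textit{error}}{|S|}n^d$ terms collect to $2ce(2d^2-4d+3)$. Since $4ce+2=2(2ce+1)$ and $4ced(d-1)=2ce\cdot 2d(d-1)$, in both cases the desired inequality reduces to the single polynomial estimate $2d^2-4d+3\le 2d(d-1)=2d^2-2d$, i.e.\ $3\le 2d$, which holds for all $d\ge 2$; this is exactly the regime of the inductive step and closes the induction.
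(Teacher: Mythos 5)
Your proposal is correct and follows essentially the same route as the paper's proof: induction on the degree $d$, applying the inductive hypothesis to the constraints $l_i \le p_i(\textbf{x}) \le u_i$, substituting $[l_i,u_i]=\hat{e_i}\pm\delta_{d-1}$, summing with $0\le y_i\le 1$, invoking the two outer constraints, and closing with the same polynomial estimate $2d^2-4d+3\le 2d(d-1)$ for $d\ge 2$. The only difference is cosmetic bookkeeping (you keep $\delta_{d-1}$ and $\Delta_{d-1}$ separate until the end, while the paper first merges them into a $(d-1)(2d-3)$ factor), so there is nothing to change.
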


\begin{proof}
    We show the lemma by induction on degree $d$. The base case $d=1$ is trivial. For the inductive step we have that $d\ge 2$ and that $\textbf{y}$ is feasible for all constraints output by \textsc{Linearize}. By the inductive hypothesis we get for each $i$:
    \begin{gather*}
        p_i(\textbf{y}) \in [l_i,u_i] \pm \bigg((4ce+2) (d-1)(d-2) \epsilon+ 4ce(d-1)(d-2)\frac{\textit{error}}{|S|}\bigg) n^{d-1}.
    \end{gather*}
    So, it follows by substituting $l_i, u_i$ that

    \begin{gather}
        p_i(\textbf{y}) \in \hat{e_i} \pm \bigg((2ce+1) (d-1) \epsilon+ 2ce(d-1)\frac{\textit{error}}{|S|}\bigg) n^{d-1} \nonumber \\ \pm \bigg((4ce+2) (d-1)(d-2) \epsilon+ 4ce(d-1)(d-2)\frac{\textit{error}}{|S|}\bigg) n^{d-1} \nonumber\\
        \subseteq \hat{e_i}
        \pm \bigg((2ce+1)(d-1) (2d-3) \epsilon + 2ce(d-1)(2d-3)\frac{\textit{error}}{|S|}\bigg) n^{d-1}. \label{eq_p(y)} 
    \end{gather}

    Thus, 
    \begin{gather*}
        p(\textbf{y}) = t + \sum y_i p_i (y_i, \dots, y_n) \\
        \subseteq t + \sum y_i \cdot \bigg(\hat{e_i} 
        \pm \bigg((2ce+1) (d-1)(2d-3) \epsilon+ 2ce(d-1)(2d-3)\frac{\textit{error}}{|S|}\bigg) n^{d-1} \bigg)
        \text{, by~\eqref{eq_p(y)},}
        \\ 
        \subseteq t + \sum y_i \hat{e_i} 
        \pm \bigg((2ce+1) (d-1)(2d-3) \epsilon+ 2ce(d-1)(2d-3)\frac{\textit{error}}{|S|}\bigg) n^{d}    \\        
        \subseteq [L,U] \pm \bigg((2ce+1) d \epsilon+ 2ced\frac{\textit{error}}{|S|}\bigg) n^{d}
        \pm \bigg((2ce+1) (d-1)(2d-3) \epsilon\\+ 2ce(d-1)(2d-3)\frac{\textit{error}}{|S|}\bigg) n^{d},   \\
        \text{from the fact that $y$ is feasible for the constraint} \\ \text{which was output by \textsc{Linearize} before recursion,}\\
        \subseteq [L,U]  \pm \bigg( (2ce+1)\epsilon (2d^2-4d+3) + 2ce(2d^2-4d+3) \frac{\textit{error}}{|S|}\bigg) n^d \\
        \subseteq [L,U]  \pm \bigg( 2(2ce+1) d (d-1) \epsilon  + 4ced (d-1) \frac{\textit{error}}{|S|}\bigg) n^d.
    \end{gather*}
    The last inequality holds, since $d \ge 2 > 3/2$.    
\end{proof}

\subsection{Randomized Rounding for Smooth Polynomials}\label{appsubsec:B4}
Finally, we have to round our fractional solution to get an integral one. Using a lemma from~\cite{dense}, which shows that the randomized rounding outputs an integer value which is close to the fractional one for every $c$-smooth degree-$d$ polynomial, we conclude the last step of \textsc{LA-PTAS}. We restate the lemma for completeness.

\begin{lemma} {Randomized rounding for degree-$d$ polynomials~\cite{dense}} \label{general_round}
    Let $p$ be a $c$-smooth degree-$d$ polynomial. Let $\textbf{y} \in [0,1]^n$ be such that $p(y_1,\dots,y_n)=b$. Performing randomized rounding on $y_i$ to yield a $0,1$ vector $(z_i)$ we get that with probability at least $1-n^{d-f}$ we have that
    \begin{equation}
        p(z_1, \dots, z_n) \in [b \pm gd n^{d-1/2} \sqrt{\ln n}],
    \end{equation}
    where $g = 2ce\sqrt{f}$.
\end{lemma}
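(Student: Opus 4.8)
The plan is to prove the lemma by induction on the degree $d$, using the decomposition of Lemma~\ref{form} to peel off one level of the polynomial at a time and invoking the linear Randomized Rounding lemma of Raghavan and Thomson (stated earlier in Section~\ref{maxcut_sec_3}) at each level. For the base case $d=1$ the polynomial is linear, say $p(\textbf{x}) = t + \sum_i a_i x_i$, and $c$-smoothness forces $|a_i| \le c \cdot n^{0} = c$; applying the linear rounding lemma to the constraint $\sum_i a_i x_i = b-t$ with constant $c$ gives $p(\textbf{z}) \in b \pm c\sqrt{fn\ln n}$ with probability at least $1-n^{-f}$. Since $c\sqrt{fn\ln n} \le 2ce\sqrt{f}\,n^{1/2}\sqrt{\ln n} = g\,n^{1-1/2}\sqrt{\ln n}$ and $1-n^{-f}\ge 1-n^{1-f}$, the base case matches the claimed bound.

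For the inductive step I would write $p(\textbf{x}) = t + \sum_i x_i p_i(\textbf{x})$ via Lemma~\ref{form}, where each $p_i$ is $c$-smooth of degree $d-1$. The crucial algebraic move is to telescope the rounding error as
\begin{equation*}
p(\textbf{z}) - p(\textbf{y}) = \sum_i z_i\bigl(p_i(\textbf{z}) - p_i(\textbf{y})\bigr) + \sum_i (z_i - y_i)\, p_i(\textbf{y}),
\end{equation*}
which cleanly separates the error from rounding inside each sub-polynomial from the error from rounding the outer multiplier. The first sum I would bound by the inductive hypothesis: each $p_i$ satisfies $|p_i(\textbf{z}) - p_i(\textbf{y})| \le g(d-1)\,n^{d-3/2}\sqrt{\ln n}$ with probability at least $1-n^{(d-1)-f}$, so since $\sum_i z_i \le n$ this whole sum is at most $g(d-1)\,n^{d-1/2}\sqrt{\ln n}$.

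For the second sum I would treat the quantities $a_i := p_i(\textbf{y})$ as fixed coefficients, since they depend only on the deterministic fractional vector $\textbf{y}$, and bound them by $|a_i| \le 2ce\,n^{d-1}$ using the fact recalled at the start of Section~\ref{general} that a $c$-smooth degree-$(d-1)$ polynomial has absolute value at most $2ce\,n^{d-1}$ on $[0,1]^n$. Applying the linear Randomized Rounding lemma to the constraint $\sum_i a_i x_i = \sum_i a_i y_i$ with constant $2ce\,n^{d-1}$ gives $\bigl|\sum_i (z_i-y_i)a_i\bigr| \le 2ce\,n^{d-1}\sqrt{fn\ln n} = g\,n^{d-1/2}\sqrt{\ln n}$ with probability at least $1-n^{-f}$. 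Adding the two contributions yields the claimed bound $gd\,n^{d-1/2}\sqrt{\ln n}$.

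The step I expect to be the main obstacle is the probability bookkeeping in the inductive step: the same random vector $\textbf{z}$ controls both the inner deviations $p_i(\textbf{z})-p_i(\textbf{y})$ (handled by the inductive hypothesis as a black box) and the outer linear form $\sum_i (z_i-y_i)p_i(\textbf{y})$ (handled by Raghavan--Thomson), so I must ensure the union bound lands exactly on $1-n^{d-f}$. This works because the $n$ applications of the inductive hypothesis, each failing with probability at most $n^{(d-1)-f}$, contribute precisely $n\cdot n^{(d-1)-f} = n^{d-f}$, while the single application of the linear rounding lemma contributes only the lower-order term $n^{-f}$; the telescoping identity is exactly what lets me treat the outer sum as a genuinely linear form with fixed coefficients bounded by $2ce\,n^{d-1}$, so that Raghavan--Thomson applies verbatim.
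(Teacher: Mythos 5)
The paper never proves this lemma: it is imported verbatim from~\cite{dense} (``we restate the lemma for completeness''), so there is no internal proof to compare against. Your reconstruction is, in essence, the original Arora--Karger--Karpinski argument: decompose $p$ via Lemma~\ref{form}, split $p(\textbf{z})-p(\textbf{y})$ by the telescoping identity into the inner deviations $\sum_i z_i\bigl(p_i(\textbf{z})-p_i(\textbf{y})\bigr)$ and the outer linear form $\sum_i (z_i-y_i)p_i(\textbf{y})$, handle the former by induction and the latter by the linear rounding lemma. The deterministic accounting is correct: the two contributions add up to $g(d-1)\,n^{d-1/2}\sqrt{\ln n} + g\,n^{d-1/2}\sqrt{\ln n} = gd\,n^{d-1/2}\sqrt{\ln n}$, and treating the $p_i(\textbf{y})$ as fixed coefficients bounded by $2ce\,n^{d-1}$ is legitimate because $\textbf{y}$ is fixed before the rounding. (One cosmetic point: the Raghavan--Thomson lemma is stated for an integer coefficient bound $c$, so to invoke it with the bound $2ce\,n^{d-1}$ you should say explicitly that you rescale the coefficients by $2ce\,n^{d-1}$ and apply the lemma with $c=1$; this rescaling is precisely what produces the constant $g=2ce\sqrt{f}$.)

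The one genuine flaw is at the point you yourself flag: the union bound does not ``land exactly on $1-n^{d-f}$''. Your inductive step accumulates failure probability $n\cdot n^{(d-1)-f}$ from the $n$ inner applications plus $n^{-f}$ from the outer linear form, i.e.\ $n^{d-f}+n^{-f}$, which strictly exceeds $n^{d-f}$; hence the statement you are inducting on (failure probability at most $n^{k-f}$ for degree-$k$ polynomials) is not reproduced at $k=d$, and the induction does not close. The repair is routine: strengthen the inductive hypothesis to ``failure probability at most $\frac{n^k-1}{n-1}\,n^{-f}$'', which is just the number of invocations of the linear rounding lemma in the recursion tree. This recursion closes exactly, since $n\cdot\frac{n^{d-1}-1}{n-1}\,n^{-f} + n^{-f} = \frac{n^d-1}{n-1}\,n^{-f}$, and the lemma follows because $\frac{n^d-1}{n-1}\le 2n^{d-1}\le n^{d}$ for all $n\ge 2$ (so you in fact obtain a slightly stronger probability bound than the one claimed). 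With that adjustment your proof is complete and coincides with the standard argument behind the cited lemma.
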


\subsection{Proof of Theorem~\ref{general_theorem}}  \label{pip_2}

\begin{proof}
    We have a feasible $c$-smooth degree-$d$ PIP with $m=\text{poly($n$)}$ constraints each one of which has degree at most $d$. Let $\textbf{a}= (a_1, \dots, a_n) \in \{0,1\}^n$ be a feasible solution. Here, we focus on the maximization problem (the minimization one is similar) of a polynomial $p(\textbf{x})$, where $x \in \{0,1 \}^n$. 

    Assume that we have found the optimal value $|OPT|>0$ of $p$ using binary search in time $O(\log cn^d)$. Then we write the maximization problem of $p$ as a feasibility one with $p(\textbf{x}) \ge |OPT|$, which is of course feasible. 
    
    Let $f>0$ be such that $n^f = 2m(n+4d)n^d/n = \Theta(m \cdot n^d)$.
    We let $\epsilon'=\frac{\epsilon}{(4ce+2)d(d-1)}$, $g=\Theta(cfd/\epsilon'^3)$ and $|S|=g \ln n$. Then, we take a random sample $S$ of variables with replacement and we are given a prediction $\hat{a_i}$  on the values $a_i$ for each $i \in S$. We use \textsc{Linearize} with error parameter $\epsilon'$ and replace each degree $d'$ constraint with $O(n^{d'-1})$ linear constraints. Therefore, we construct a linear integer system with $O(m \cdot n^{d-1})$ constraints. This new system is feasible with probability at least $1-4md/n^{f+1-d}$, since $\textbf{a}$ is an optimal solution to the PIP.

    Let us now relax the integrality constraint of each variable and solve the linear system with $n$ variables and $O(m \cdot n^{d-1})$ constraints in time $T'_{LP}$. From Lemma~\ref{error_dependence} for the fractional solution $\textbf{y}$ we get that the following holds:
    \begin{equation*}
        p(\textbf{y}) \ge |OPT|- \bigg((4ce+2) d(d-1) \epsilon'+ 4ced(d-1)\frac{\textit{error}}{|S|}\bigg) n^{d}.
    \end{equation*}

    Next, we use randomized rounding to get an integer solution $\textbf{z}$ that increases the additive loss by at most $O(n^{d-1/2} \sqrt{\ln n})=o(n^d)$. The rounding from Lemma~\ref{general_round} works simultaneously for all $m$ constraints with probability at least $1-m/n^{f-d}$.

    Consequently, our randomized learning-augmented approximation scheme works with probability at least 
    $1-m/n^{f-d}-4md/n^{f+1-d} > 1/2$ and outputs a solution such that 
        \begin{equation*}
        |\textsc{LA-PTAS}| \ge |OPT| - \bigg(\epsilon+ 4ced(d-1)\frac{\textit{error}}{|S|}\bigg) n^{d},
    \end{equation*}
    where $|S| = \Theta(\frac{128 c^4 e^4 f d^7}{\epsilon^3} \ln n) = \Theta(\frac{ c^4 f d^7}{\epsilon^3} \ln n)$.

    For the running time, we have to also guess the value of the \textit{error} which takes at most $n$. So, in total the general algorithm \textsc{LA-PTAS} runs in time  $O\big(n \ln (cn^d) \cdot T'_{LP} \big)$.
\end{proof}

\section{Missing material from Section~\ref{applications}}\label{appsec:app}

\paragraph*{\textsc{Max-DICUT}}
Let us write \textsc{Max-DICUT} of a directed graph $G=(V,E)$ as an $1$-smooth degree-$2$ polynomial integer program as follows:
\begin{align*}
    &\text{max }\sum_{(i, j) \in E} (1-x_i)x_j\\
    &\text{s.t. } x_i \in \{0,1\} \, \forall{i}.
\end{align*}
In a directed graph with density $\delta$, the value of the maximum cut is at least $\delta n^2 / 4$. Using \textsc{LA-PTAS} and Theorem~\ref{general_theorem}
with $\epsilon' = \delta \epsilon/4$ we find a cut of value at least
\begin{equation*}
    |OPT|\bigg(1 - \epsilon - 32e \frac{\textit{error}}{\delta |S|}\bigg),
\end{equation*}
where $|S| = O(\ln n /(\epsilon^3 \delta^3))$ and the running time is $O(\ln n \cdot T'_{LP})$. For the robustness of \textsc{LAA-General}, there is a polynomial time randomized approximation algorithm that achieves a ratio of $0.859$~\cite{maxdicut}. The rest of the steps to construct our two learning-augmented schemes for \textsc{Max-DICUT} follow trivially.

\paragraph*{\textsc{Max-HYPERCUT}($d$)}
We can formulate the problem as a smooth degree-$d$ PIP. Given an edge (set of vertices) $S'$, we use the term $1-\prod_{i \in S'} x_i - \prod_{i \in S'} (1-x_i)$, which is $1$ if $S'$ is cut and $0$ otherwise. Moreover, for the robustness of \textsc{LAA-General} we can use the randomized poly-time algorithm with approximation ratio of $0.72$~\cite{hypergraph}.

\paragraph*{$k$-\textsc{Densest Subgraph}}

Let $k \ge \gamma n$. If the graph is $\delta$-dense, using an averaging argument we have that the optimal solution contains at least $\gamma^2 \delta n^2/2$ edges. The problem is equivalent to maximizing the following degree-$2$ $1$-smooth PIP:
\begin{gather*}
    \text{max } p(\textbf{x})= \sum_{\{i, j\} \in E} x_i x_j\\
    \text{s.t. } \sum_{i=1}^n x_i =k\\
    \qquad x_i \in \{0,1\}.
\end{gather*}
We use the general algorithm \textsc{LA-PTAS} with $\epsilon' = \epsilon \gamma^2 \delta /2$ and noticing that our solution $z$ satisfies $\sum_{i=1}^n x_i \in [k \pm g \sqrt{n \ln n}]$ (Lemma~\ref{general_round}). Moving in or out at most $O( \sqrt{n \ln n})$ vertices, as $g=O(1)$, reduces the number of edges included in the subgraph by at most $n \sqrt{n \ln n}=o(n^2)$.

There is no known constant approximation poly-time algorithm for the problem. There is a deterministic greedy algorithm that achieves a ratio of $O(k/n)$~\cite{asahiro}, which is equal to $O(\gamma)$ in our case (we assume in this work that $\gamma$ is a constant). In~\cite{subgraph}, they give a randomized algorithm with approximation ratio at least $k/n \ge \gamma$. Finally, we can also use the original PTAS of~\cite{dense} for a not too small $\epsilon>0$.

\end{document}